\documentclass[twoside]{article}

\usepackage[colorlinks=true,citecolor=blue,linkcolor=blue,urlcolor=blue]{hyperref}
\usepackage{amssymb}
\usepackage{graphicx}
\usepackage{placeins}
\newcommand{\Real}{\mathbb{R}}
\newcommand{\KL}{\mathrm{KL}}
\usepackage{amsmath,amssymb,amsthm}
\newtheorem{definition}{Definition}
\newtheorem{theorem}{Theorem}
\newtheorem{proposition}{Proposition}

\usepackage[preprint]{aistats2027}
\usepackage[round]{natbib}
\renewcommand{\bibname}{References}
\renewcommand{\bibsection}{\subsubsection*{\bibname}}

\begin{document}
%

%

\runningtitle{Correlated Quantal Response Equilibrium}

\twocolumn[

\aistatstitle{Correlated Quantal Response Equilibrium: \\ A New Equilibrium Notion for Human-AI Strategic Interactions}

\aistatsauthor{
Andy (Zehao) Zhao\textsuperscript{1}
\And
Rahul Jain\textsuperscript{1}
\And
Vijay Subramanian\textsuperscript{2}
}

\aistatsaddress{
\textsuperscript{1}Department of Electrical and Computer Engineering,
University of Southern California\\
\texttt{\{zhaozeha,rahul.jain\}@usc.edu}\\
\textsuperscript{2}EECS Department, University of Michigan, Ann Arbor\\
\texttt{vsgsubram@umich.edu}
}
]

\begin{abstract}
The modeling of the behavior of agents in strategic environments and the prediction of the resulting outcomes has long been a challenging problem; this has been the focus of game theory. The introduction of AI agents as human-aids or as sole decision-makers expands the underlying question: Can we model behavior and predict outcomes of strategic games that involve a mix of AI agents and human players? As an initial answer, we propose Correlated Quantal Response Equilibrium (CQRE) that combines and generalizes classical solution concepts such as Quantal Response Equilibrium and Correlated Equilibrium. We answer questions of existence and propose a method for computing the CQRE that is a best fit to a given joint action distribution. We demonstrate the empirical effectiveness of CQRE predictions using a diverse collection of complex general-sum games, including high-rank and large-action two-player games, challenging multiplayer games, and a large-scale extension of the Selten--Chmura setting to AI--AI and human--AI play. 
\end{abstract}

\section{Introduction}

The early twentieth century roots of game theory lie in modeling human behavior when involved in strategic interactions \citep{von2007theory}, and as a result equilibrium notions \citep{nash1950equilibrium} were introduced as predictions of outcomes. While mathematically elegant, it was then observed that what are now classical solution concepts (or equilibrium) for games fall short of being able to accurately predict outcomes of strategic interactions between human players. This led to many other equilibrium notions (and their refinements) within the game theoretic framework. Eventually, it was understood that human players are bounded rational and not fully rational as assumed in game theory. In response, notions such as Quantal Response Equilibrium (QRE) \citep{mckelvey1995qre} and its variants were developed, which empirically appear to be better predictors of human behavior in strategic interactions. 

The recent use of LLMs as decision aids or even autonomous agents (which we will call `super-rational', somewhat similar to fully rational but still distinct) that take decisions in strategic settings has again highlighted the question of how to model and predict outcomes of AI-AI, human-AI, and even human-human interactions. In this paper, we explore this question, and propose a new solution concept that bring together classical correlated equilibrium with QRE, leading to a new notion of Correlated QRE. Using already available human-human game data \citep{selten2008stationary} and AI-human and AI-AI data that we generate, we show that this notion provides a better empirical fit than extant notions. 


Classical (mixed) Nash equilibrium~\citep{nash1951noncoop} imposes exact best responses and independent randomization between players. Logit-QRE~\citep{mckelvey1995qre,goeree2005regular} softens the best response through payoff-sensitive stochastic choice and is widely used to describe experimental human play~\citep{selten2008stationary}, but it still retains independent randomization between players. Correlated Equilibrium (CE)~\citep{aumann1974subjectivity,aumann1987correlated} allows for coordination and cross-player dependence by studying joint distributions. However, it requires exact obedience to every recommendation and yields a set of distributions instead of a unique prediction. What is missing is a point prediction over joint actions that represents dependence while permitting heterogeneous action-specific deviations from exact obedience.

To that end, we introduce \emph{correlated quantal response equilibrium} (CQRE) and its refinements for general finite $N$-player games. The CQRE extends CE to the setting where the player action choices only need to be close to a best action, instead of one of the best actions as in CE. We achieve this by allowing for slack in how players' evaluate each of their action choices; as this setting allows for per-player and per-action slack, when needed, we also refer to this as CQRE-pa. As with QRE, the slack can be interpreted as a payoff sensitive stochastic choice, or even a tremble. Given a slack vector, CQRE selects the joint distribution with maximum Shannon entropy within a feasible set of distributions, equivalently, the feasible distribution closest to uniform~\citep{jaynes1957information}. This rule yields a unique prediction for every fixed slack vector, which also allows us to show the existence of CQRE. To highlight the flexibility of CQRE-pa, we distinguish two sub-classes of CQRE-pa: CQRE-pp (per player slack) and CQRE-c (common slack) which enforce different types of regularity on the allowed slack.

We evaluate CQRE across human and AI agent play in two- and three-player games. We assess the quality of the prediction using a closeness score---the Kullback-Leibler (KL) distance of the predicted joint action distribution to empirically observed data. Using a  computational scheme that obtains the slack values for the best fit CQRE, we compare predictions with those of other solution concepts. Relative to the best fit Logit-QRE, CQRE proves to be much better on the twelve human plays' games, achieves a better aggregate fit on AI--AI targets, has a better mean fit in six of seven larger-action games, and is favored in all 56 three-player comparisons. This result can be seen in Figure~\ref{fig:sc7-all49-continuous} which clearly shows a better fit to empirical data than logit-QRE on human-human data from twelve games~\citep{selten2008stationary}. However, Logit-QRE remains better in constructed product-form human--AI games. Through our evaluation we also contribute 201,600 accepted one-shot decisions from six OpenAI's GPT-5.x model deployments for the Selten-Chmura games. 



\begin{figure*}[t]
    \centering
    \includegraphics[width=\textwidth]
    {\detokenize{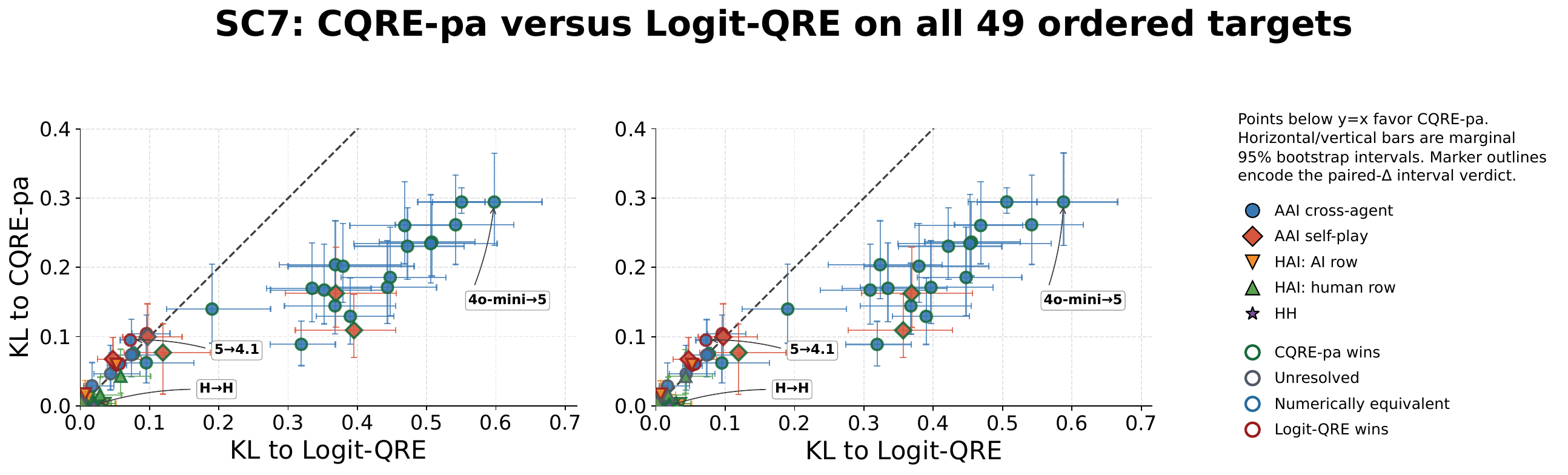}}
    \caption{CQRE-pa versus Logit-QRE on all 49 ordered SC7 targets.
    The Logit-QRE domain is $\beta\in[0,20]$ in the left panel, with
    mean $\Delta=-.0906$ and paired 95\% interval
    $[-.0949,-.0848]$. The right panel uses
    $\beta\in[0,\infty]$, with mean $\Delta=-.0821$ and interval
    $[-.0863,-.0764]$.}
    \label{fig:sc7-all49-continuous}
\end{figure*}

\noindent\textbf{Contributions.} Our contributions are threefold: (1) We introduce a novel solution concept for games: CQRE for general finite-player games that allows for slack in the obedience constraints, and prove its existence and uniqueness for any slack vector. We also establish nesting of common, per-player, and per-player plus per-action slack versions, and derive an exponential-family characterization. (2) We provide a computational solution for the problem of finding the closest CQRE-pa as a bi-level optimization problem with an implementation that is efficient and stable. We use this to show that CQRE-pa provides a much better fit than even logit-QRE for a wide variety of games. (3) The paper's contribution also includes novel LLM decision datasets for various games that could be more generally useful for modeling strategic human-AI and AI-AI interactions behavior; human-human data is from \citep{selten2008stationary}. 


\noindent\textbf{Related work.} CQRE connects to three lines of research. First, QRE and its generalizations model bounded rationality through stochastic or regularized player-level responses~\citep{mckelvey1995qre,goeree2005regular,zhao2026efficient}, but their joint predictions remain products of individual mixed strategies. Second, correlated equilibrium incorporates coordination, where the maximum-entropy correlated equilibrium yields a unique selection from the set of correlated-equilibrium~\citep{aumann1974subjectivity,aumann1987correlated,ortiz2007maximum}. CQRE combines maximum-entropy selection with action-indexed relaxations of obedience. 

A third line of work studies LLMs as strategic players and suggests that LLMs do not exhibit a single stable pattern of strategic behavior: their choices vary with the game, model, prompt framing, and information about opponents, and even strong models sometimes fail to form consistent preferences, update beliefs, or choose payoff-maximizing actions~\citep{akata2023playing,lore2024strategicllms,fan2024gametheory}. These departures from rational choice motivate our use of relaxed obedience conditions, while variation across games and models motivates evaluation across diverse game structures and LLM populations.


Although prior studies evaluate LLM cooperation and coordination in repeated games, pure-coordination tasks, and broader multi-agent game environments~\citep{akata2023playing,agashe2025llmcoordination,xu2024magic}, they do not provide a common equilibrium framework that directly studies the joint action distributions, in particular, dependence between players' action choices, and supports comparisons across human--human, human--AI, and AI--AI play. CQRE is designed to fill this gap.

\section{Correlated Quantal Response Equilibrium}
\label{sec:cqre}
We develop a single probabilistic prediction of joint play by maximum-entropy correlated equilibrium models with relaxed obedience constraints. Starting with a quick review of Logit-QRE and Correlated Equilibrium, we define CQRE by first defining the set of joint distributions that satisfy the relaxed obedience constraints and then select a unique joint distribution by maximizing entropy over this set. We distinguish two slack parameterizations within the general class of slack vectors: a single scalar slack shared by all players and actions which yields CQRE-c; and one slack per player which gives CQRE-pp. Our primary notion is the most flexible per-player and per-action model, which we also call CQRE-pa (when needed); the others (CQRE-c and CQRE-pp) serve as benchmarks. This section characterizes the CQRE distribution for fixed slacks, while Section~\ref{sec:compute} describes how the slack parameters are determined to provide the best fit to joint distributions that arise from observed play.

\noindent\textbf{Notation.} Consider a finite normal-form game $\mathcal{G} =(\mathcal{N},(A_i)_{i\in\mathcal{N}}, (u_i)_{i\in\mathcal{N}})$, where $\mathcal{N}=\{1, \ldots,N\}$ is the set of players, $A_i$ is player $i$'s finite action set, and $u_i:A\to\Real$ is player $i$'s payoff function, with $A:=\prod_{j\in\mathcal{N}}A_j$. An action profile is written as $a=(a_i,a_{-i})$, where $A_{-i}:=\prod_{j\ne i}A_j$. For any finite set $S$, let $\Delta(S)$ denote the probability simplex over $S$. A mixed-strategy profile $\sigma =  (\sigma_i)_{i\in\mathcal{N}}$, with $\sigma_i\in\Delta(A_i)$, induces the product distribution $\bigotimes_{i\in\mathcal{N}}\sigma_i$ over $A$. In contrast, $\rho\in\Delta(A)$ denotes a joint distribution that is potentially correlated. For $a_i,\widetilde a_i\in A_i$ with $\widetilde a_i\ne a_i$, the action profile $(\widetilde a_i,a_{-i})$ represents a unilateral deviation by player $i$ from $a_i$ to $\widetilde a_i$. We denote $\sigma_{-i}:=(\sigma_j)_{j\ne i}$. 

\noindent\textbf{Logit-QRE.} Given opponents' mixed strategies $\sigma_{-i}$, define the expected payoff from action $a_i$ as
\begin{equation}
  v_i(a_i;\sigma_{-i})
  :=
  \sum_{a_{-i}\in A_{-i}}
  \left(\prod_{j\ne i}\sigma_j(a_j)\right)
  u_i(a_i,a_{-i}).
\label{eq:logit-value}
\end{equation}

\begin{definition}[Logit-QRE]
  \label{def:logit-qre}
  Given parameters
  $\boldsymbol{\beta}
  =(\beta_i)_{i\in\mathcal{N}}\in\Real_+^N$, a
  mixed-strategy profile $
  \sigma=(\sigma_i)_{i\in\mathcal{N}}$ is a
  Logit-QRE if, for every $i\in\mathcal{N}$ and $a_i\in
  A_i$,
  \begin{equation}
  \sigma_i(a_i)
  =
  \frac{\exp\{\beta_i v_i(a_i;\sigma_{-i})\}}
  {\sum_{b_i\in A_i}
  \exp\{\beta_i v_i(b_i;\sigma_{-i})\}}.
  \label{eq:logit-qre}
  \end{equation}
  \end{definition}
A Logit QRE's joint prediction satisfies $\rho^{\mathrm{LQRE}} (a)=\prod_{i\in\mathcal{N}}\sigma_i(a_i)$ and is therefore product-form. Consequently, Logit-QRE treats players' randomized actions as independent and cannot capture cross-player dependence represented by a general joint distribution. 

The logit quantal response above is obtained for each $i\in \mathcal{N}$  as the maximizer of 
\begin{equation}
\label{eqn:qr-def}
\max_{\pi_{i} \in \Delta(A_i)} 
 \sum_{a_i\in A_i} \pi_i(a_i) \big[\beta_i v_i(a_i,\sigma_{-i}) - \log(\pi_i(a_i))\big].
\end{equation}
Via a Lagrangian relaxation, we note that \eqref{eqn:qr-def} can also be interpreted as solving for the minimizer of
\begin{align}
\begin{split}
& \min_{\pi_i\in \Delta(A_i)} \sum_{a_i\in A_i}\pi_i(a_i) \log(\pi_i(a_i))\\
& \text{s.t. } \sum_{a_i\in A_i} \pi_i(a_i) v_i(a_i,\sigma_{-i}) \geq \gamma_i, 
\end{split}
\label{eqn:utility_preservation}
\end{align}
for an appropriate $\gamma_i$; there is an equivalence between $\gamma_i$ and $\beta_i$ in \eqref{eqn:qr-def}. We will use the relaxation interpretation in \eqref{eqn:utility_preservation} to define CQRE. 

\noindent\textbf{Correlated Equilibrium.} Unlike Logit-QRE, correlated equilibrium~\citep{aumann1974subjectivity, aumann1987correlated} works with joint distributions $\rho\in\Delta(A)$, so allows coordination between players. For player $i$, recommended action $a_i$, and alternative action $\widetilde a_i\ne a_i$, the recommendation-weighted gain from deviating is defined as
\begin{equation}
  \begin{split}  & R_{i,a_i\rightarrow\widetilde a_i}(\rho)
  :=\\
  & \sum_{a_{-i}\in A_{-i}}
  \rho(a_i,a_{-i})
  \left[
  u_i(\widetilde a_i,a_{-i})
  -u_i(a_i,a_{-i})
  \right].  
  \end{split}
\label{eq:deviation-gain}
\end{equation}
When $\rho_i(a_i):=\sum_{a_{-i}}\rho(a_i,a_{-i})>0$, dividing \eqref{eq:deviation-gain} by $\rho_i(a_i)$ gives the expected gain conditional on receiving recommendation $a_i$.
\begin{definition}[CE]
  \label{def:ce}
  A joint distribution $\rho\in\Delta(A)$ is a
  correlated equilibrium if
  \begin{equation}
  R_{i,a_i\rightarrow\widetilde a_i}(\rho)\le 0, 
  \; 
  \forall\; i\in\mathcal{N},\
  a_i\in A_i,\text{ and }\widetilde a_i\ne a_i.
  \label{eq:exact-obedience}
  \end{equation}
\end{definition}

The obedience holds as follows. A mediator draws an action profile from $\rho$ and privately recommends action $a_i$ to player $i$. Assuming the other players follow their recommendations, obedience requires player $i$ to weakly prefer $a_i$ to every alternative $\widetilde a_i$. When $\rho_i(a_i):=\sum_{a_{-i}}\rho(a_i,a_{-i})>0$, the quantity $R_{i,a_i\rightarrow\widetilde a_i}(\rho)/\rho_i(a_i)$ is the expected payoff gain from deviating conditional on receiving recommendation $a_i$. Correlated  equilibrium requires every such gain be nonpositive. It captures cross-player dependence, but allows no violation of obedience and yields a set of joint distributions instead of a unique prediction.

\noindent\textbf{Correlated QRE.} Observed human and AI behavior does not satisfy every obedience inequality exactly. We therefore replace the zero bound with a nonnegative slack $\gamma_{i,a_i}$ for each player and recommended action with the same slack applied to all possible deviations.
\begin{definition}[CQRE] \label{def:cqre} 
For a nonnegative slack vector $\gamma=(\gamma_{i,a_i})_{i\in\mathcal{N},\,a_i\in A_i} \in\Real_+^{\sum_i|A_i|}$, define
  \begin{equation} K_\gamma := \left\{\rho\in\Delta(A): R_{i,a_i\rightarrow\widetilde a_i}(\rho) \le \gamma_{i,a_i}
\; \forall i,\; a_i, \widetilde a_i\ne a_i \right\}. \label{eq:relaxed-obedience}
\end{equation}
The corresponding correlated quantal response equilibrium is
\begin{equation} \rho_\gamma^\star := \operatorname*{arg\,max}_{\rho\in K_\gamma} \left\{-\sum_{a\in A}\rho(a)\log\rho(a)\right\},  0\log0:=0.
  \label{eq:cqre}
  \end{equation}
\end{definition}

The slack $\gamma_{i,a_i}$ bounds the recommendation-weighted, \emph{ex-ante} deviation gain. The maximum entropy criterion
selects the least concentrated distribution consistent with the relaxed obedience constraints; we note the analogy to the Logit-QRE via \eqref{eqn:utility_preservation}. 

\begin{theorem}[Existence and uniqueness] \label{thm:exists}
For every finite normal-form game and every nonnegative slack vector $\gamma$, problem~\eqref{eq:cqre} has a unique solution.
  \end{theorem}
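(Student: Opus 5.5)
The plan is the standard argument for maximizing a strictly concave function over a nonempty compact convex set. There are three steps: show $K_\gamma$ is nonempty, show it is compact and convex, and show the objective is continuous and strictly concave on the simplex.

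\emph{Nonemptiness.} This is the step that uses the game structure. Since $\gamma\ge 0$, it suffices to exhibit a correlated equilibrium. Every correlated equilibrium satisfies $R_{i,a_i\rightarrow\widetilde a_i}(\rho)\le 0\le\gamma_{i,a_i}$ and hence lies in $K_\gamma$. By Nash's theorem \citep{nash1951noncoop}, every finite game has a mixed Nash equilibrium $\sigma$. I will check that the product distribution $\rho=\bigotimes_i\sigma_i$ is a correlated equilibrium. A direct computation gives
\begin{equation*}
R_{i,a_i\rightarrow\widetilde a_i}(\rho)=\sigma_i(a_i)\bigl[v_i(\widetilde a_i;\sigma_{-i})-v_i(a_i;\sigma_{-i})\bigr].
\end{equation*}
This quantity is zero when $\sigma_i(a_i)=0$. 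When $\sigma_i(a_i)>0$, it is nonpositive because every action in the support of $\sigma_i$ is a best response to $\sigma_{-i}$. Hence $K_\gamma\neq\emptyset$. I expect this to be the only step that needs more than routine verification, and even here the only input is Nash existence. An alternative would be Hart--Schmeidler's LP-duality proof of CE existence, but citing Nash is shorter.

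\emph{Compactness and convexity.} Each map $\rho\mapsto R_{i,a_i\rightarrow\widetilde a_i}(\rho)$ is linear in $\rho$. Therefore $K_\gamma$ is the intersection of the simplex $\Delta(A)$ with finitely many closed half-spaces. It is a nonempty compact convex polytope in $\Real^{|A|}$.

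\emph{Existence and uniqueness of the maximizer.} The entropy $H(\rho)=-\sum_a\rho(a)\log\rho(a)$, with the convention $0\log 0=0$, is continuous on $\Delta(A)$ because $t\log t\to 0$ as $t\downarrow 0$. By Weierstrass, a maximizer over the compact set $K_\gamma$ exists. For uniqueness, $\phi(t)=t\log t$ is strictly convex on $[0,\infty)$, so $H$ is a sum of strictly concave coordinate functions and is strictly concave on $\Delta(A)$. Suppose $\rho\neq\rho'$ were both maximizers with value $H^\ast$. Convexity of $K_\gamma$ puts $\tfrac12(\rho+\rho')$ in $K_\gamma$. Some coordinate has $\rho(a)\neq\rho'(a)$, and strict convexity of $\phi$ at that coordinate gives $H(\tfrac12(\rho+\rho'))>H^\ast$, a contradiction. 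Hence $\rho^\star_\gamma$ is unique.
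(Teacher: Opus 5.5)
Your proposal is correct and follows essentially the same route as the paper's proof. Nonemptiness comes from the product distribution of a mixed Nash equilibrium lying in $K_0\subseteq K_\gamma$, and existence and uniqueness then follow from compactness and convexity of the polytope together with continuity and strict concavity of entropy; your midpoint argument just spells out the uniqueness step that the paper states in one line.
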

\begin{proof} The product distribution induced by any mixed Nash equilibrium belongs to the exact correlated-equilibrium set $K_0$, and hence, to $K_\gamma$. Thus, $K_\gamma$ is nonempty. It is also a compact convex polytope as it is the intersection of the probability simplex with finitely many linear inequalities. Shannon entropy is continuous and strictly concave on the simplex, so it attains its maximum at exactly one point in $K_\gamma$.
\end{proof}
Thus every fixed slack vector produces a unique joint prediction, eliminating the selection ambiguity of correlated equilibrium. 

\noindent\textbf{Slack paramterizations.} We consider two restrictions of the slack vector: (i)~\emph{common slack CQRE} (\emph{CQRE-c}), where $\gamma_{i,a_i} =\epsilon$ for every player and action; and (ii)~\emph{per-player slack CQRE} (\emph{CQRE-pp}), where $\gamma_{i,a_i}=\epsilon_i$ for every $a_i\in A_i$. Note that the CQRE definition allows for each $\{\gamma_{i,a_i}\}_{i\in \mathcal{N}, a_i \in A_i}$ to be chosen separately, i.e., in a \emph{per-player} and \emph{per-action} manner.  Hence, when needed, we also use the term \emph{CQRE-pa} for a CQRE. The scalar and per-player specifications will serve as restricted benchmarks.

\begin{proposition}[Nesting] \label{prop:nesting}
Every CQRE distribution obtainable under the scalar parameterization is also obtainable under the per-player parameterization, and every distribution obtainable under the per-player parameterization is also obtainable under the per-player and per-action parameterization.
\end{proposition}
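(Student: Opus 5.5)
The plan is to show that each parameterization is the image of the next, more flexible one under a suitable choice of slack vector. The maximizer $\rho^\star_\gamma$ in \eqref{eq:cqre} depends only on the feasible set $K_\gamma$ in \eqref{eq:relaxed-obedience}, and hence only on the vector $\gamma$. So it suffices to exhibit explicit embeddings of parameter spaces that preserve $\gamma$ exactly. We will not need any argument at the level of distributions.

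First, take a CQRE distribution under the scalar parameterization, so $\rho=\rho^\star_\gamma$ with $\gamma_{i,a_i}=\epsilon\ge 0$ for all $i$ and $a_i$. Set $\epsilon_i:=\epsilon$ for every $i\in\mathcal{N}$. The per-player slack vector defined by $\gamma_{i,a_i}=\epsilon_i$ is then literally the same vector in $\Real_+^{\sum_i|A_i|}$. It therefore yields the same set $K_\gamma$, and by the uniqueness part of Theorem~\ref{thm:exists} the same maximizer $\rho^\star_\gamma$. Hence $\rho$ is obtainable under the per-player parameterization.

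Second, take a per-player CQRE $\rho^\star_\gamma$ with $\gamma_{i,a_i}=\epsilon_i$. This $\gamma$ is already an admissible nonnegative vector indexed by $(i,a_i)$, which is exactly the domain allowed in Definition~\ref{def:cqre}. Choosing $\gamma_{i,a_i}:=\epsilon_i$ in the per-player and per-action parameterization reproduces the same constraints, and hence, again by Theorem~\ref{thm:exists}, the same unique distribution.

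There is no real obstacle. The only point requiring care is to note that uniqueness (Theorem~\ref{thm:exists}) makes ``the'' CQRE a well-defined function of $\gamma$. This ensures that identical slack vectors give identical predictions, rather than possibly different selections from a set of maximizers. One may also remark that the inclusions of parameter sets are $\{\epsilon\mathbf 1\}\subseteq\{(\epsilon_i)_{i,a_i}\}\subseteq\Real_+^{\sum_i|A_i|}$, so the nesting is inherited directly.
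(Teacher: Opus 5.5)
Your proof is correct and follows essentially the same route as the paper. The paper likewise embeds scalar slack via $\epsilon_i=\epsilon$ and per-player slack via $\gamma_{i,a_i}=\epsilon_i$, and notes that this leaves the slack vector, the feasible set $K_\gamma$, and the unique maximum-entropy distribution unchanged.
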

\begin{proof} A scalar slack is embedded in the per-player specification by setting $\epsilon_i=\epsilon$ for every player. A per-player slack is embedded in the per-action specification by setting $\gamma_{i,a_i}=\epsilon_i$ for every $a_i\in A_i$. Each embedding leaves the slack vector, feasible set, and unique maximum-entropy distribution unchanged.
\end{proof}
When $\gamma=0$, the relaxed obedience set $K_0$ coincides with the correlated-equilibrium set, and $\rho_0^\star$ is its unique maximum-entropy element. To characterize the opposite limiting end point, define the largest possible gain from deviating from action $a_i$ as
\begin{equation} \overline d_{i,a_i} := \max\left\{0,\
\max_{\substack{\widetilde a_i\ne a_i\\a_{-i}\in A_{-i}}}
\left[ u_i(\widetilde a_i,a_{-i})-u_i(a_i,a_{-i}) \right] \right\}.
\label{eq:max-deviation}
\end{equation}
If $\gamma_{i,a_i}\ge\overline d_{i,a_i}$ for every player-action pair, all obedience constraints are redundant. Consequently, $K_\gamma=\Delta(A)$ and $\rho_\gamma^\star$ is the uniform distribution over joint action profiles.
More generally, if $\gamma\le\gamma'$ component-wise, then $K_\gamma\subseteq K_{\gamma'}$. Hence the maximum attainable entropy is non-decreasing as the obedience constraints are relaxed. 

\begin{proposition}[Exponential-family form] \label{prop:exp}
Suppose $\rho_\gamma^\star$ has full support. Then  

\begin{equation}
\begin{split}
  & \rho_\gamma^\star(a) = \\
  & \frac{1}{Z(\lambda)} 
  \exp\left\{ 
  \sum_{i\in\mathcal{N}} \sum_{\widetilde a_i\ne a_i} \lambda_{i,a_i,\widetilde a_i}
  \left[u_i(a_i,a_{-i}) -u_i(\widetilde a_i,a_{-i}) \right] 
  \right\},
  \end{split}
  \label{eq:gibbs}
\end{equation}
where $Z(\lambda)$ is the normalizing constant and $\lambda_{i,a_i,\widetilde a_i}\ge0$ is the KKT multiplier associated with the corresponding obedience constraint. These multipliers satisfy
\begin{equation}
  \lambda_{i,a_i,\widetilde a_i} \left[ R_{i,a_i\rightarrow\widetilde a_i}(\rho_\gamma^\star) -\gamma_{i,a_i} \right] =0.
  \label{eq:complementary-slackness}
\end{equation}
\end{proposition}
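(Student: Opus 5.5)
We would prove the exponential-family form directly from the KKT conditions of the convex program \eqref{eq:cqre}. Write the problem as minimizing the negative entropy $\sum_a \rho(a)\log\rho(a)$ subject to the linear inequalities $R_{i,a_i\to\widetilde a_i}(\rho)-\gamma_{i,a_i}\le 0$, the equality $\sum_a\rho(a)=1$, and $\rho\ge 0$. The objective is convex and all constraints are affine, so Slater's condition is not needed. Linear constraints suffice for constraint qualification, so KKT multipliers exist at the optimum $\rho_\gamma^\star$. That optimum is unique by Theorem~\ref{thm:exists}.

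The key steps are as follows. First, form the Lagrangian with multipliers $\lambda_{i,a_i,\widetilde a_i}\ge 0$ for the obedience constraints, $\nu$ for normalization, and $\mu_a\ge0$ for nonnegativity. Second, use the full-support hypothesis. Since $\rho_\gamma^\star(a)>0$ for all $a$, complementary slackness forces $\mu_a=0$. Moreover, $\rho_\gamma^\star$ lies in the relative interior of the simplex, where the entropy is differentiable, so stationarity can be written coordinatewise. This is exactly why full support is assumed: at the boundary the derivative of $\rho\log\rho$ blows up.

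Third, compute the derivative of $R_{i,b_i\to\widetilde b_i}(\rho)$ with respect to $\rho(a)$. It equals $\mathbf 1\{a_i=b_i\}\,[u_i(\widetilde b_i,a_{-i})-u_i(a_i,a_{-i})]$, because $R_{i,b_i\to\widetilde b_i}$ only involves profiles whose $i$-th coordinate is $b_i$. Stationarity then reads
\[
\log\rho(a)+1+\nu+\sum_{i}\sum_{\widetilde a_i\ne a_i}\lambda_{i,a_i,\widetilde a_i}\bigl[u_i(\widetilde a_i,a_{-i})-u_i(a_i,a_{-i})\bigr]=0 .
\]
Solving for $\rho(a)$ gives the exponential of $\sum_i\sum_{\widetilde a_i\ne a_i}\lambda_{i,a_i,\widetilde a_i}[u_i(a_i,a_{-i})-u_i(\widetilde a_i,a_{-i})]$ times the constant $e^{-1-\nu}$. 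Imposing $\sum_a\rho(a)=1$ identifies this constant as $1/Z(\lambda)$, which yields \eqref{eq:gibbs}. Complementary slackness for the obedience inequalities is exactly \eqref{eq:complementary-slackness}, and dual feasibility gives $\lambda\ge0$.

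The main obstacle is justifying necessity of the KKT conditions carefully. We would handle it by citing the standard result that for a convex program with affine constraints, a minimizer admits KKT multipliers; this is the linearity (Abadie) constraint qualification. Differentiability at $\rho_\gamma^\star$ is guaranteed by full support. The bookkeeping in step three, namely matching the indicator $a_i=b_i$ so that only the multipliers indexed by the recommended action $a_i$ appear, is the only other place requiring care.
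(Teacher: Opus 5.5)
Your proposal is correct and follows essentially the same route as the paper. Both arguments apply KKT stationarity to the entropy program over the polyhedral set $K_\gamma$ (the paper phrases your linearity constraint qualification as the normal cone of a polyhedron being generated by the normalization equality and the active obedience rows), use full support to drop the nonnegativity multipliers and read off the Gibbs form, and obtain $\lambda\ge 0$ and complementary slackness from dual feasibility and the KKT conditions. The paper's appendix additionally handles the boundary case, by showing $\operatorname{supp}(\rho_\gamma^\star)$ equals the attainable support and restricting to that face, but this is not needed for the full-support statement you are proving.
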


\begin{proof} Apply the KKT conditions to the entropy-maximization problem over $K_\gamma$. Stationarity with respect to each positive probability $\rho(a)$ gives the exponential form, while dual feasibility and complementary slackness give the stated conditions.
\end{proof}
A complete proof, including the boundary-support version of the KKT characterization, is provided in Appendix~\ref{app:theory}. The full-support assumption is needed because a finite exponential representation assigns positive probability to every profile. When $\rho_\gamma^\star$ lies on the boundary of the simplex, the corresponding representation applies on its support.

\noindent\textbf{Comparison to Logit-QRE beyond the analogy.} Equations~\eqref{eq:logit-qre} and~\eqref{eq:gibbs} both have exponential forms, but their parameters have different meanings. In Logit-QRE, $\beta_i$ is a model parameter that modulates the expected payoffs of the actions available to player $i$; larger values make the player's response more concentrated on higher-payoff actions. In CQRE, $\gamma_{i,a_i}$ is a model parameter that sets the maximum permitted recommendation-weighted deviation gain from action $a_i$. Once $\gamma$ is fixed, the inner entropy maximization determines the multipliers $\lambda_{i,a_i,\widetilde a_i}$. These multipliers are not fitted behavioral parameters, instead they indicate which (relaxed) obedience constraints influence the maximum-entropy solution. An inactive constraint has the multiplier being zero, whereas an active constraint may have a positive multiplier. Therefore, $\beta$, $\gamma$, and $\lambda$ do not have a direct analogy. Moreover, Logit-QRE always produces a product distribution, and CQRE being defined directly over joint action profiles, extends beyond product distributions.


    

\section{Determining best-fit CQRE}
\label{sec:compute}
Section~\ref{sec:cqre} introduced common, per-player and per-action parameterizations of relaxed obedience. We focus here on CQRE-pa, while retaining the other two as benchmarks. Relaxing obedience introduces a vector of unknown slack parameters: for every fixed $\gamma$, the CQRE inner problem selects the unique maximum-entropy distribution $\rho_\gamma^\star$, but theory does not determine which $\gamma$ best describes observed behavior. Estimating the slack from observed play leads to a bi-level optimization problem.

 Given a (empirical) target distribution $\hat p\in\Delta(A)$, the inner problem computes $\rho_\gamma^\star$ for each candidate slack vector, and the outer problem selects
\begin{equation}
  \widehat{\gamma} \in \operatorname*{arg\,min}_{\gamma\in\Gamma_{\mathrm{pa}}} \KL\!\left(\hat p\,\middle\|\, \rho_\gamma^\star\right),
  \label{eq:outer-problem}
\end{equation}
where $\Gamma_{\mathrm{pa}}$ is the compact per-action slack domain to be discussed in the following. The resulting CQRE-pa prediction is $\widehat{\rho}:=\rho_{\widehat{\gamma}}^\star$. This is not a joint optimization over $(\rho,\gamma)$: for every candidate $\gamma$, the inner maximum-entropy problem must first determine $\rho_\gamma^\star$.

Because each obedience slack bounds a recommendation-weighted expected payoff gain, $\gamma_{i,a_i}$ is measured in the same utility units as the game's payoff differences. We rescale each game before fitting CQRE. Let
\begin{equation}
  s_{\mathcal G} := \max_{\substack{i\in\mathcal N,\,
  a_i,\widetilde a_i\in A_i\\
  a_{-i}\in A_{-i}}}
  \left| u_i(a_i,a_{-i}) -u_i(\widetilde a_i,a_{-i}) \right|.
\end{equation}
When $s_{\mathcal G}>0$, we divide every payoff by $s_{\mathcal G}$; when $s_{\mathcal G}=0$, no rescaling is needed. Consequently, every unilateral payoff difference has magnitude at most one. Adding a constant to a player's payoffs does not affect any obedience constraint. In the remainder of this section, $u_i$ denotes the rescaled payoff function.

\noindent\textbf{Matrix representation of relaxed obedience.} Let $\mathcal{D}:=\{(i,b_i,\widetilde b_i): i\in\mathcal{N},\ b_i,\widetilde b_i\in A_i,\ \widetilde b_i\ne b_i\}$
index the obedience constraints, and let $\mathcal{I}:=\{(i,b_i):i\in\mathcal{N},\ b_i\in A_i\}$ index the per-action slack coordinates. Their dimensions are $d_{\mathcal D}=\sum_{i\in\mathcal N}|A_i|(|A_i|-1)$ and $d_{\mathrm{pa}}=\sum_{i\in\mathcal N}|A_i|$, respectively. After fixing an ordering of the joint profiles, write $\rho=(\rho(x))_{x\in A}\in\Real^{|A|}$. Define the obedience matrix $C\in\Real^{d_{\mathcal D}\times |A|}$ by
\begin{equation}
  C_{(i,b_i,\widetilde b_i),x} := \mathbf{1}\{x_i=b_i\}
  \left[ u_i(\widetilde b_i,x_{-i})-u_i(b_i,x_{-i}) \right].
  \label{eq:obedience-matrix}
\end{equation}
Then
\begin{equation}
  (C\rho)_{(i,b_i,\widetilde b_i)} = R_{i,b_i\rightarrow\widetilde b_i}(\rho).
\end{equation}

Let $D_{\mathrm{pa}}\in\{0,1\}^{d_{\mathcal D}\times d_{\mathrm{pa}}}$ replicate each player-action slack across all deviations from that action:
\begin{equation}
(D_{\mathrm{pa}})_{(i,b_i,\widetilde b_i),(j,c_j)} := \mathbf{1}\{i=j,\ b_i=c_j\}.
\label{eq:slack-replication}
\end{equation}
The relaxed obedience set can therefore be written as
\begin{equation}
  K_\gamma = \left\{ \rho\in\Real_+^{|A|}:
  \mathbf{1}_{|A|}^{\mathsf T}\rho=1,\quad
  C\rho\le D_{\mathrm{pa}}\gamma \right\}.
  \label{eq:matrix-feasible-set}
\end{equation}

Here $\mathbf{1}_{|A|}$ is the $|A|$-dimensional all-ones vector, and the inequality is interpreted componentwise.

Equation~\eqref{eq:max-deviation} shows that any value $\gamma_{i,a_i}>\overline d_{i,a_i}$ is redundant: replacing it by $\overline d_{i,a_i}$ leaves the feasible set unchanged. We may therefore restrict the per-action slack vector to the compact
domain
\begin{equation}
  \Gamma_{\mathrm{pa}}
  :=\prod_{(i,b_i)\in\mathcal I}[0,\overline d_{i,b_i}].
  \label{eq:pa-domain}
\end{equation}

For any nonnegative slack vector $\gamma$, define its componentwise clipped version by $\widetilde\gamma_{i,a_i} :=\min\{\gamma_{i,a_i},\overline d_{i,a_i}\}$. Then $\widetilde\gamma\in\Gamma_{\mathrm{pa}}$, $K_{\widetilde\gamma}=K_\gamma$, and therefore $\rho_{\widetilde\gamma}^\star=\rho_\gamma^\star$. Hence restricting \eqref{eq:outer-problem} to $\Gamma_{\mathrm{pa}}$ does not change the set of CQRE-pa predictions.

\noindent\textbf{Inner maximum-entropy problem.} For a fixed slack vector $\gamma\in\Gamma_{\mathrm{pa}}$, the CQRE inner problem can be written in matrix form as
\begin{equation}
  \begin{aligned}
  \rho_\gamma^\star \in\operatorname*{arg\,min}_{\rho\in\Real_+^{|A|}:  \mathbf{1}_{|A|}^{\mathsf T}\rho=1} \quad
  & \sum_{x\in A}\rho(x)\log\rho(x)\\
  \text{subject to}\quad&
  C\rho\le D_{\mathrm{pa}}\gamma,
  \end{aligned}
  \label{eq:inner-matrix-program}
\end{equation}
with the convention $0\log0=0$. The program has $|A|=\prod_{i\in\mathcal N}|A_i|$ probability variables and $d_{\mathcal D}=\sum_{i\in\mathcal N}|A_i|(|A_i|-1)$ obedience inequalities. Its objective is convex and its constraints are linear; Theorem~\ref{thm:exists} therefore guarantees a unique optimizer.

Our implementation solves \eqref{eq:inner-matrix-program} using sequential least-squares programming~\citep{kraft2005slsqp}. The solver returns the unique distribution $\rho_\gamma^\star$ used to evaluate the outer loss in \eqref{eq:outer-problem}. Numerical tolerances, initialization, and feasibility checks are reported in Appendix~\ref{app:cqre-computation}.

\noindent\textbf{Outer loss.} For an empirical target $\hat p\in\Delta(A)$, define
\begin{equation}
  L_{\mathrm{pa}}(\gamma) :=
  \KL\!\left(\hat p\,\middle\|\,\rho_\gamma^\star\right)=
  \sum_{\substack{x\in A\\\hat p(x)>0}}
  \hat p(x)\log
  \frac{\hat p(x)}{\rho_\gamma^\star(x)}.
  \label{eq:outer-loss}
\end{equation}
The fitted slack vector $\widehat\gamma$ minimizes $L_{\mathrm{pa}}$ over $\Gamma_{\mathrm{pa}}$, as stated in \eqref{eq:outer-problem}. Appendix~\ref{app:outer-theory} establishes existence of an outer minimizer and gives the required continuity and regularity results.

\noindent\textbf{Outer optimization.} At points where the active obedience constraints remain unchanged and the inner KKT system satisfies the required regularity conditions, we differentiate that system implicitly to obtain $J_\gamma:=\partial\rho_\gamma^\star/\partial\gamma$. The gradient outer objective is
\begin{equation}
  \nabla_\gamma L_{\mathrm{pa}}(\gamma) =
  -J_\gamma^{\mathsf T}
  \left( \hat p\oslash\rho_\gamma^\star \right),
  \label{eq:outer-gradient}
\end{equation}
where $\oslash$ denotes component-wise division. Appendix~\ref{app:implicit-gradient} derives the KKT-based expression for $J_\gamma$. We use the analytic gradient only when the feasibility, complementarity, rank, conditioning, and full-support checks pass. Otherwise, the algorithm uses bounded finite differences, with one-sided differences at the boundary of $\Gamma_{\mathrm{pa}}$. Changing the slack vector can alter which inner obedience constraints bind, so the outer loss is not guaranteed to be convex. We first search over a fixed grid of candidate slack values. Starting from the best grid solution and additional feasible points, we then run multistart limited-memory Broyden--Fletcher--Goldfarb--Shanno optimization with box constraints (L-BFGS-B)~\citep{byrd1995limited}. The grid solution is retained as an incumbent, so continuous optimization cannot return a worse fit up to numerical tolerance. Complete details of the discrete initialization and continuous optimization are provided in Appendices~\ref{app:outer-domain-computation} and~\ref{app:continuous-outer-computation}.

\noindent\textbf{Other equilibrium benchmarks.} Besides CQRE, we evaluate five equilibrium benchmarks. Common and per-player CQRE are fitted using the same continuous bi-level procedure described earlier and serve as ablations of per player and per-action specific slack. By Proposition~\ref{prop:nesting}, their best attainable in-sample KL cannot be lower than that of CQRE-pa. The CE selection we use is the maximum-entropy CE obtained by setting $\gamma=0$. The mixed-Nash equilibrium benchmark is a product mixed-strategy equilibrium; when multiple mixed Nash equilibria exist, we select the one with the lowest KL divergence to the target.

For Logit-QRE, let $\rho_\beta^{\mathrm{LQRE}}$ denote the product-form solution of Definition~\ref{def:logit-qre} at $\beta$, computed using the original game payoffs. For each target, we fit
\begin{equation}
  \widehat\beta
  \in
  \operatorname*{arg\,min}_{\beta\in[0,\infty]}
  \KL\!\left(
  \hat p\,\middle\|\,
  \rho_\beta^{\mathrm{LQRE}}
  \right).
\label{eq:logit-fitting}
\end{equation}
The case of $\beta=\infty$ denotes the limiting mixed-Nash prediction on the selected solution branch. As a robustness check, we also fit Logit-QRE over $\beta\in[0,20]$. Appendix~\ref{app:logit-computation} specifies the parameter domains, branch-selection rule, equilibrium enumeration, and numerical procedures for all benchmarks.


\section{Empirical Results}
\label{sec:empirical}

\paragraph{Datasets.}\label{par:datasets}
We evaluate CQRE-pa on four benchmark panels spanning human and LLM behavior, two- and three-player games, and varying action-space and payoff structure. The first panel uses the published Selten--Chmura human data. The remaining panels contain LLM decisions collected for this study: seven strongly-monotone and rank-$r$ two-player games, a three-player extension, and an all-12 Selten--Chmura extension. Our principal dataset contribution is the all-12 Selten--Chmura extension, containing 201,600 valid one-shot decisions from six OpenAI model deployments. Combined with published human frequencies, these data support 588 separately fitted target distributions: 432 AI--AI, 144 constructed directional human--AI, and 12 human--human targets.

We define a \emph{target} as an empirical joint-action distribution for a fixed game, agent-to-role assignment, and experimental condition. Appendix~\ref{app:data} documents the game constructions, prompt structure, model deployments, collection settings, target-construction rules, and acceptance and retry procedures.

\noindent\textbf{Evaluation setup.} Using the panels described in \hyperref[par:datasets]{Datasets}, we investigate four questions: (i) whether CQRE-pa improved descriptive fit to established human behavior; (ii) how its relative fit changes across human--human, AI--AI, and constructed human--AI targets for different games; (iii) whether its performance extends to games with larger action spaces and more than two players; and (iv) which populations and game structures favor CQRE-pa instead of other notions, such as, Logit-QRE.

\paragraph{Evaluation method.} 
\label{par:evaluation-method}
For each empirical target distribution $\hat p$, we fit CQRE-pa and the equilibrium benchmarks defined in Section~\ref{sec:compute} to the same data. Our primary comparison uses
\begin{equation}
  \Delta := \KL\!\left(\hat p\middle\|
  \widehat\rho_{\mathrm{CQRE\text{-}pa}\!}\right) -
  \KL\!\left(\hat p\middle\|
  \widehat\rho_{\mathrm{Logit\text{-}QRE}\!}\right),
  \label{eq:empirical-delta}
\end{equation}
so $\Delta<0$ favors CQRE-pa. We report synchronized paired 95\% bootstrap intervals, refitting both models within each replicate. Because the observational units differ across panels, each subsection specifies its resampling unit and number of replicates. These results measure descriptive in-sample fit. 


\subsection{Established Human Behavior} \label{sec:selten}
We first evaluate CQRE-pa on the twelve Selten--Chmura human games. For each game, the target is the equal-weight average of its observed within-group joint-action distributions. To preserve the dependence structure of the original experiment, each of $B=2000$ hierarchical bootstrap replicates resamples entire subject groups and then resamples the observed action pairs within each selected group. Figure~\ref{fig:selten-human-kl} visualizes the CQRE-pa and Logit-QRE fits, while Appendix Table~\ref{tab:selten} reports the complete per-game results and paired confidence intervals.

 CQRE-pa provides the best aggregate fit to established human behavior. Its mean KL is $0.0040$, compared with $0.0093$ for Logit-QRE, corresponding to a 56.9\% reduction. Fit also improves across the nested CQRE specifications, from $0.0072$ for CQRE-c to $0.0057$ for CQRE-pp and $0.0040$ for CQRE-pa, demonstrating the value of action-specific slack. CQRE-pa has the lowest point KL in nine of the twelve games; At the aggregate level, the paired interval for $\Delta$ is strictly negative, establishing that CQRE-pa fits the twelve-game human benchmark than Logit-QRE.

\subsection{AI vs AI and Human vs AI extension}
  \label{sec:selten-all12}
We next use the all-12 Selten--Chmura extension to compare agent populations while holding the underlying games fixed. We cross six AI models with one another and with the human benchmark for each game, producing $7\times7=49$ ordered row-agent--column-agent targets. Across twelve games, the resulting 588 targets comprise 432 AI--AI (AAI), 144 human--AI (HAI), and 12 human--human (HH) targets. AAI targets use paired AI outcomes, HH targets preserve the observed human joint distributions, and HAI targets are constructed by corresponding human and AI marginals.

\begin{table}[t]
  \centering
  \scriptsize
  \setlength{\tabcolsep}{2.5pt}
  \caption{All-12 Selten--Chmura results by agents. Model
  columns report mean KL. Negative $\Delta$
  favors CQRE-pa;
  intervals are paired 95\% bootstrap.}
  \label{tab:selten-populations}
  \begin{tabular}{@{}lrrrl@{}}
  \hline
  Population & Targets & CQRE-pa & Logit-QRE & $\Delta$
  [95\% CI] \\
  \hline
  HH  & 12  & .00400 & .00929 & $-.0053\;
  [-.0092,-.0008]$ \\
  AAI & 432 & .23526 & .24253 & $-.0073\;
  [-.0082,-.0063]$ \\
  HAI & 144 & .10011 & .08896 & $+.0112\;[+.0098,
  +.0124]$ \\
  \hline
  All & 588 & .19744 & .20016 & $-.00272\;
  [-.00354,-.00192]$ \\
  \hline
  \end{tabular}
  \end{table}

The relative advantage changes systematically across agent populations. CQRE-pa reduces mean KL by 56.9\% on HH targets and by 3.0\% on AAI targets, while its target-weighted mean KL across all 588 targets is 1.36\% lower than that of Logit-QRE. Constructed HAI targets exhibit the opposite pattern, with Logit-QRE attaining 11.1\% lower mean KL. This contrast is consistent with CQRE-pa's joint and action-specific flexibility being useful for HH and AAI outcomes, whereas the product-form Logit-QRE is naturally aligned with HAI targets constructed as products of human and AI marginals. We examine SC7 as a concrete example of this pattern. All four population and orientation aggregates have negative mean $\Delta$: $-0.1092$ for AAI, $-0.0069$ for human-row HAI, $-0.0052$ for AI-row HAI, and $-0.0188$ for HH. Complete game-, population-, and model-level results are reported in Appendix~\ref{app:all12-results}.

\subsection{Larger-Action Games}
  \label{sec:multigame}

\noindent\textbf{Games.} We next test whether CQRE-pa's performance extends to larger action spaces and different payoff structures. Adapting the structured game families studied by \citet{zhao2026efficient}, we consider (i) strongly-monotone $n{\times}n$ games with $n\in\{3,5,7\}$ and (ii) general-sum $5{\times}5$ games generated from rank-$r$ payoff components, with $r\in\{1,2,3,4\}$. For each game, six AI models form 21 pairings, with 200 independent plays per pairing.

\noindent\textbf{Strongly-monotone games.}
For each $n\in\{3,5,7\}$, we draw a Gaussian matrix, extract and normalize its antisymmetric component $K$, and set $R_1=0.2I_n+\sqrt{n}K$ and $R_2=0.2I_n-\sqrt{n}K$. Thus $R_1+R_2=0.4I_n$ is positive definite, while $K$ introduces cyclic interactions between actions.

\noindent\textbf{Rank-$r$ general-sum games.} For each $r\in\{1,2,3,4\}$, we draw $U,V\in\Real^{5\times r}$ and form the rank-at-most-$r$ component $M=UV^{\mathsf T}$. Combining $M$ with an independently generated normalized antisymmetric matrix $K$, we set $R_1=M+0.3\sqrt{5}K$ and $R_2=-M+0.3\sqrt{5}K$. This family holds the number of actions fixed while varying the rank of the structured payoff component.

Following the \hyperref[par:evaluation-method] {evaluation method} described above, we fit both equilibrium models to each target and use $B=1000$ times of bootstrap. Complete inference details are provided in Appendix~\ref{app:inference}.

\noindent\textbf{Results.} Across the seven games and 147 model-pair targets, CQRE-pa is favored in 111 comparisons (75.5\%), Logit-QRE is favored in 24 (16.3\%), and 12 (8.2\%) are unresolved. CQRE-pa has lower mean KL in six of the seven games and sweeps all 21 pairings in \texttt{sm\_n5}, \texttt{sm\_n7}, and \texttt{rank1\_n5}. Complete game-level KL values, paired intervals, and target verdicts are reported in Appendix Table~\ref{tab:multigame-results}.

\begin{figure}[t]
    \centering

    \includegraphics[width=.48\columnwidth]
    {\detokenize{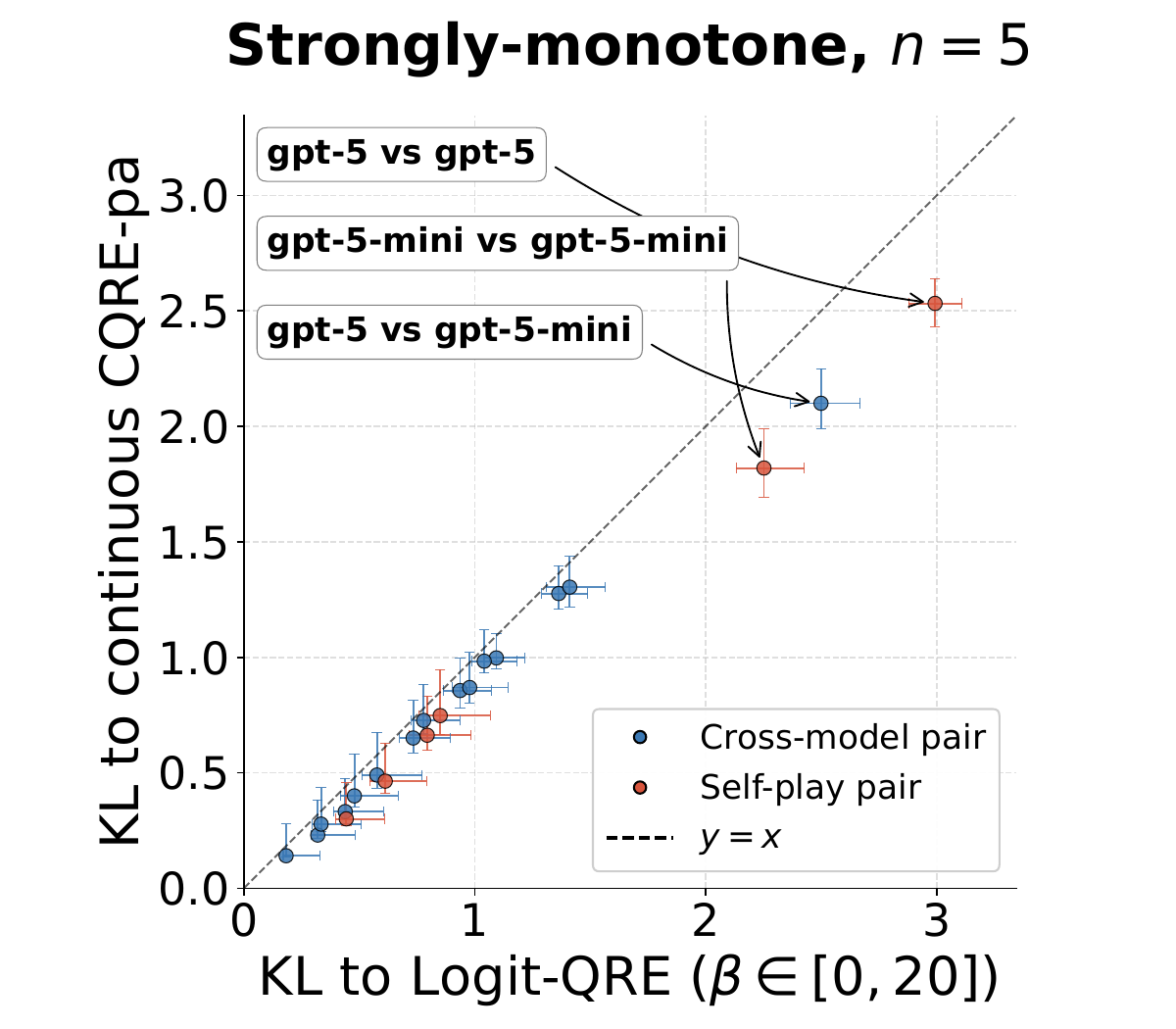}}
    \hfill
    \includegraphics[width=.48\columnwidth]
    {\detokenize{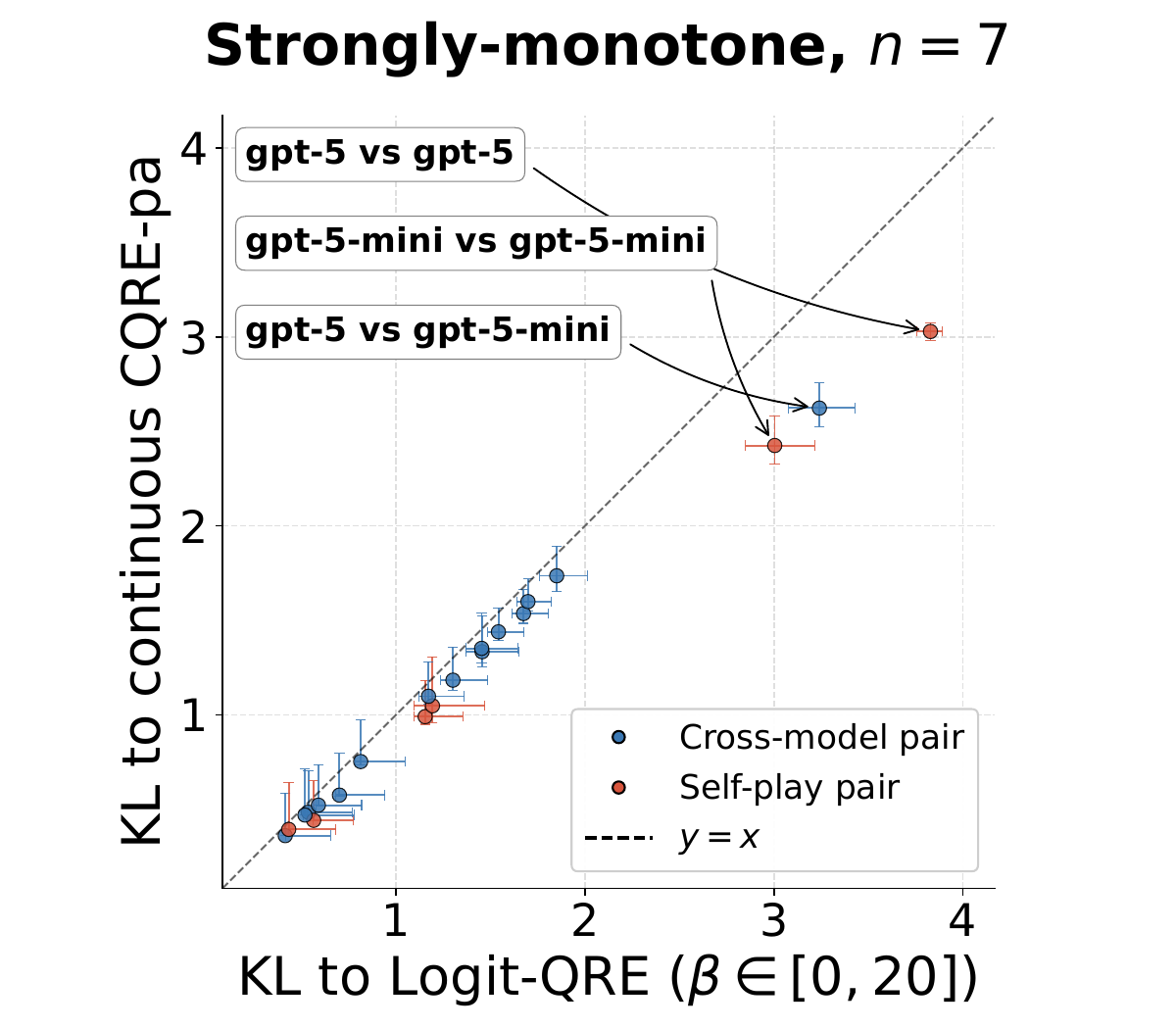}}

    \par\vspace{-1mm}

    \includegraphics[width=.48\columnwidth]
    {\detokenize{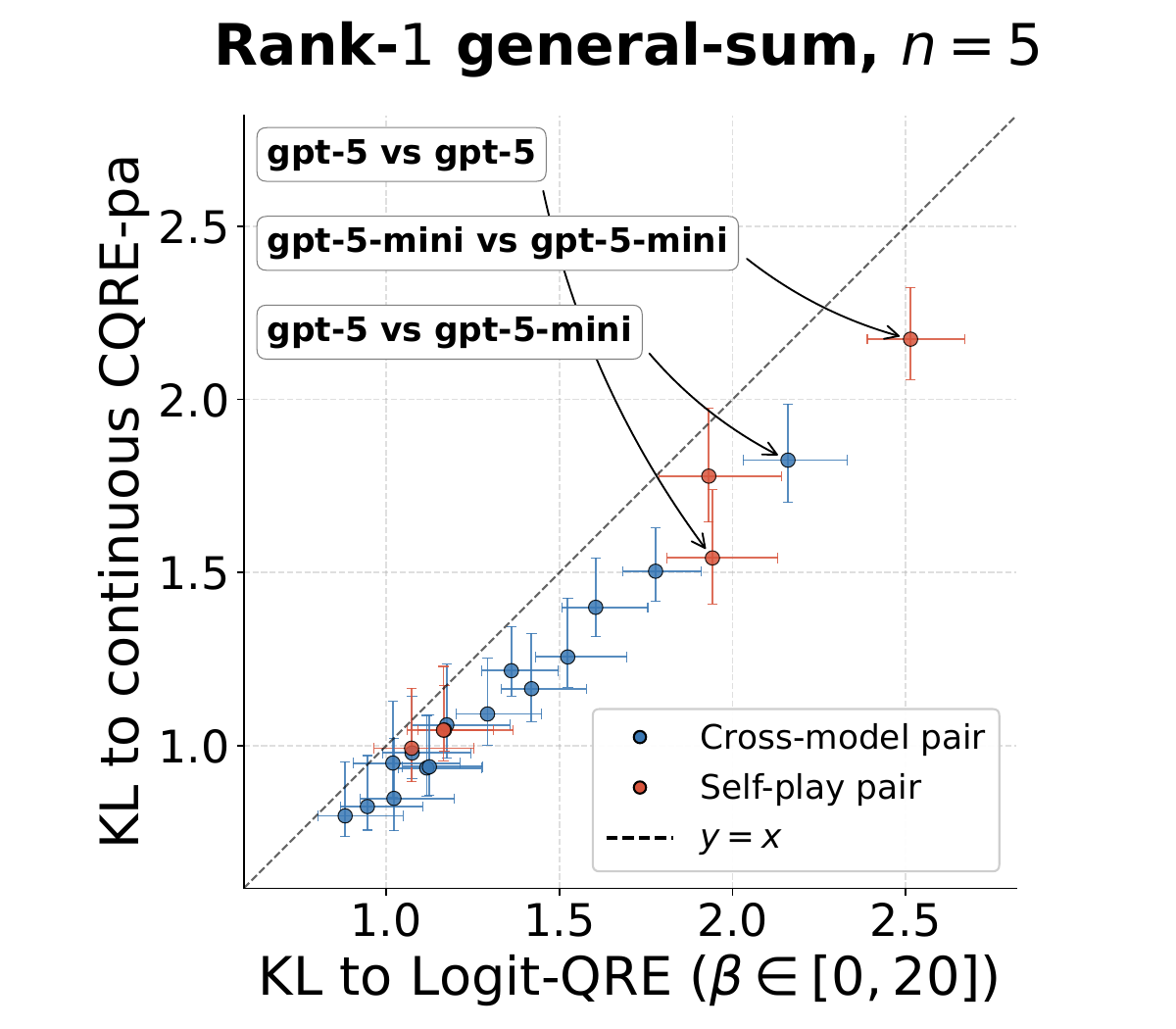}}
    \hfill
    \includegraphics[width=.48\columnwidth]
    {\detokenize{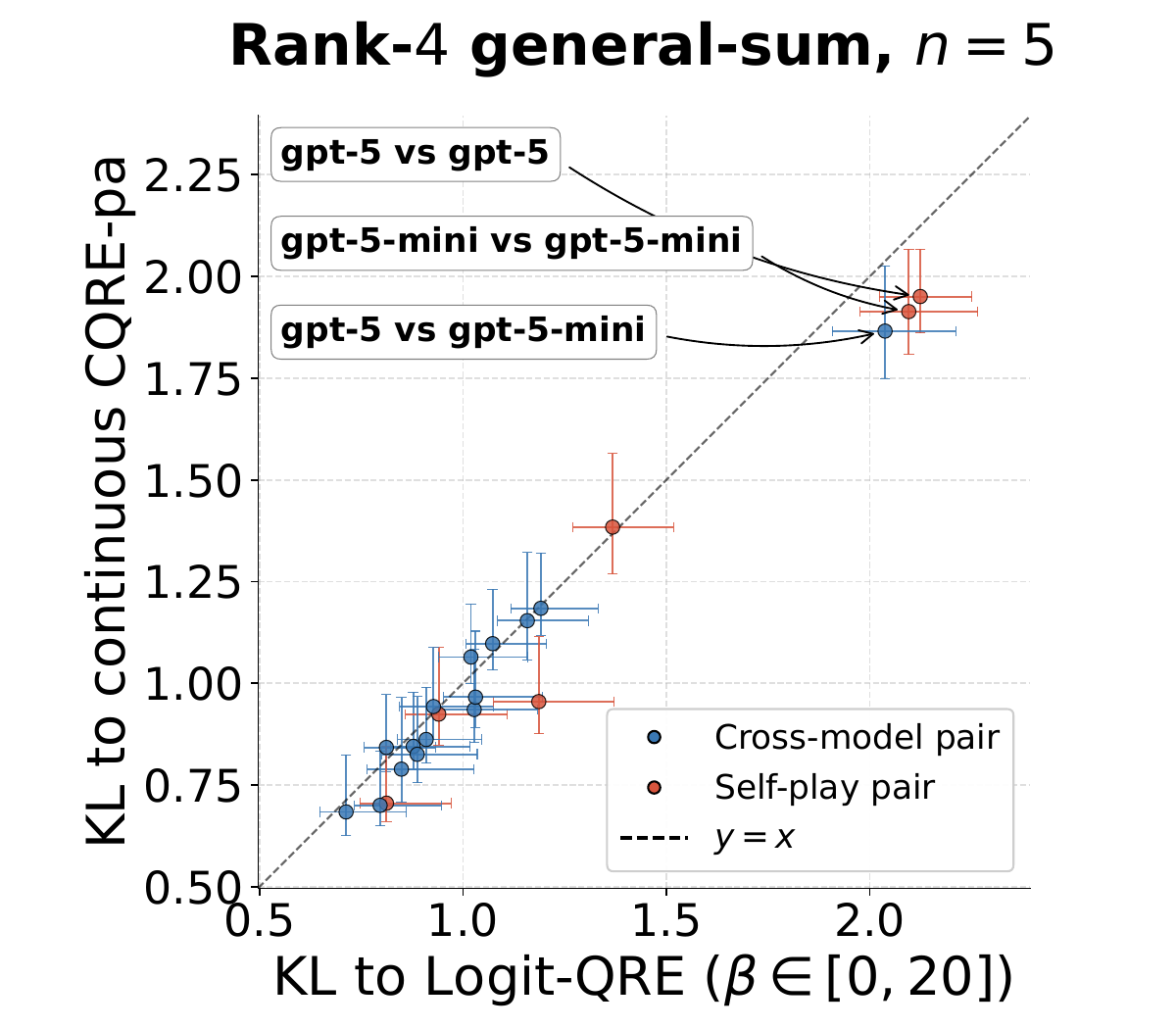}}

    \caption{Selected larger-action comparisons using CQRE-pa
    and Logit-QRE fits over $\beta\in[0,20]$. Each panel contains
    21 model pairings; points below the diagonal favor CQRE-pa.
    CQRE-pa sweeps \texttt{sm\_n5}, \texttt{sm\_n7}, and
    \texttt{rank1\_n5}, and has lower aggregate KL on
    \texttt{rank4\_n5}. Complete figures in
  Appendix Figure~\ref{fig:multigame-all}.}
    \label{fig:multigame-selected}
  \end{figure}

 CQRE-pa improves both strongly-monotone and rank-structured games, showing that its gains are not confined to $2{\times}2$ settings. Complete game- and target-level results are reported in Appendix~\ref{app:multigame-results}.

\noindent\textbf{Where Logit-QRE wins.}
Sixteen of the 24 strict Logit-QRE wins occur in \texttt{rank2\_n5}. GPT-5-family self/near-self play is more concentrated in every game: across these 21 targets, mean $\Delta=+0.25$, compared with $-2.78$ for the other 126 targets. In \texttt{sm\_n3}, three GPT-5-family targets place 98--99\% of their mass on the unique pure Nash equilibrium, allowing Logit-QRE to approach a point mass while CQRE-pa remains less concentrated. Across games, their mean CQRE-pa KL is 1.7--3.3 times the corresponding within-game mean for other targets, suggesting that highly concentrated play is a stress case for maximum-entropy selection. This pattern is not uniform, however: CQRE-pa wins all 12 GPT-5-family comparisons across \texttt{sm\_n5}, \texttt{sm\_n7}, \texttt{rank1\_n5}, and \texttt{rank4\_n5}. Further details are reported in Appendix~\ref{app:multigame-results}.



\noindent\textbf{Effective slack dimension.} In a separate strongly-monotone size sweep over $n=3, \ldots,10$, we stack the 21 componentwise-minimal fitted slack vectors for each game. Let $r_{90}$ denote the smallest number of singular directions explaining at least 90\% of their centered variation. Across the eight games, $r_{90}$ ranges from 2 to 7, corresponding to 22--50\% of the nominal $2n$-dimensional slack space. Complete spectra and fitted-rank results are reported in Appendix~\ref{app:slack-rank}. Sensitivity to extending the Logit-QRE domain to $\beta\in[0,\infty]$ is reported in Appendix~\ref{app:logit-sensitivity}.

\subsection{Three-Player Games}
  \label{sec:three-player}
Next, we test whether CQRE-pa's relative performance extends beyond two-player games. We consider \texttt{sm\_3p3a}, a three-player, three-action strongly-monotone-style game with a unique pure Nash equilibrium at $(A,C,C)$. The six AI models form all 56 model triplets with replacement, with 200 independent plays per triplet. Each empirical target is a distribution over 27 joint action profiles. Following the \hyperref[par:evaluation-method]{evaluation} above, we use $B=1000$ bootstraps and refit both CQRE-pa and Logit-QRE within every replicate.

CQRE-pa reduces mean KL by 21.6\% relative to Logit-QRE. Moreover, the target-specific paired interval lies below zero for every model triplet, yielding 56 CQRE-pa wins and no unresolved or Logit-QRE-favored comparisons. Thus, CQRE-pa's descriptive advantage is not confined to 2 player interaction: it exhibit an even stronger fit for joint multi player games; results for all 56 model triplets are reported in Appendix~\ref{app:three-player-results}.








\section{Conclusions}

In this paper, we introduced CQRE, a maximum-entropy model of joint play under per-player and per-action relaxation of correlated-equilibrium obedience, and proved that it yields a unique prediction for every fixed slack vector. We developed a continuous bi-level procedure to estimate these slacks from observed play. Empirically, CQRE attains lower mean KL  than Logit-QRE on established human and AI--AI play, and in six of seven larger-action games, and all three-player comparisons against logit-QRE. We also contribute a reproducible novel dataset of 201,600 one-shot AI decisions across all twelve Selten--Chmura games.

\newpage

\subsection*{AI use statement}
We used generative AI tools to assist with language editing, \LaTeX{} formatting, and figures of the paper. In addition, language models are objects of study in this work and we use LLM to generate the one-shot decisions comprising the AI datasets under the experimental protocol described in the paper. The authors reviewed and revised all AI-assisted text, all dataset generation and numerical data were cross-checked, and the plots are verified. 
Generative AI was not used to make final decisions about the paper's theoretical claims, empirical conclusions, or interpretation of results; other required disclosure categories are not applicable. We take responsibility for the complete content of this work.


\bibliography{refs}
\section*{Checklist}



\begin{enumerate}

  \item For all models and algorithms presented, check if you include:
  \begin{enumerate}
    \item A clear description of the mathematical setting, assumptions, algorithm, and/or model. [Yes] Sections~\ref{sec:cqre} and~\ref{sec:compute} define CQRE and the bi-level estimation procedure for the best-fit CQRE.
    \item An analysis of the properties and complexity (time, space, sample size) of any algorithm. [Yes] Section~\ref{sec:compute} reports the dimensions of the optimization problems, while Appendix~\ref{app:cqre-computation} reports solver, numerical safeguards, and limitations of the nonconvex outer optimization.
    \item (Optional) Anonymized source code, with specification of all dependencies, including external libraries. [Yes] An anonymized source code, dependency specifications, and reproduction instructions.
  \end{enumerate}

  \item For any theoretical claim, check if you include:
  \begin{enumerate}
    \item Statements of the full set of assumptions of all theoretical results. [Yes] 
    \item Complete proofs of all theoretical results. [Yes] 
    \item Clear explanations of any assumptions. [Yes] 
  \end{enumerate}

  \item For all figures and tables that present empirical results, check if you include:
  \begin{enumerate}
    \item The code, data, and instructions needed to reproduce the main experimental results (either in the supplemental material or as a URL). [Yes] 
    \item All the training details (e.g., data splits, hyperparameters, how they were chosen). [Yes] 
    \item A clear definition of the specific measure or statistics and error bars (e.g., with respect to the random seed after running experiments multiple times). [Yes] 
    \item A description of the computing infrastructure used. (e.g., type of GPUs, internal cluster, or cloud provider). [Not Applicable] 
  \end{enumerate}

  \item If you are using existing assets (e.g., code, data, models) or curating/releasing new assets, check if you include:
  \begin{enumerate}
    \item Citations of the creator If your work uses existing assets. [Yes]  
    \item The license information of the assets, if applicable. [Yes]
    \item New assets either in the supplemental material or as a URL, if applicable. [Yes] 
    \item Information about consent from data providers/curators. [Not Applicable]
    \item Discussion of sensible content if applicable, e.g., personally identifiable information or offensive content. [Not Applicable]
  \end{enumerate}

  \item If you used crowdsourcing or conducted research with human subjects, check if you include:
  \begin{enumerate}
    \item The full text of instructions given to participants and screenshots. [Not Applicable]
    \item Descriptions of potential participant risks, with links to Institutional Review Board (IRB) approvals if applicable. [Not Applicable]
    \item The estimated hourly wage paid to participants and the total amount spent on participant compensation. [Not Applicable]
  \end{enumerate}

\end{enumerate}

\clearpage
\appendix
\thispagestyle{empty}

\onecolumn













\appendix
\onecolumn
\aistatstitle{Supplementary Materials}

\section{Additional Theoretical Results}
\label{app:theory}

This appendix gives the full arguments behind the fixed-slack CQRE
properties and the regularity results used by the continuous outer
estimator.  We use the notation of Sections~\ref{sec:cqre}
and~\ref{sec:compute}.  In particular, $C$ is the obedience matrix,
$D_{\mathrm{pa}}$ replicates each player--action slack across all deviations
from that action, and
\[
K_\gamma
=\left\{\rho\in\Real_+^{|A|}:
\mathbf 1_{|A|}^{\mathsf T}\rho=1,\quad
C\rho\le D_{\mathrm{pa}}\gamma\right\}.
\]
All vector inequalities in this appendix are componentwise.

\subsection{Fixed-Slack Existence, Uniqueness, and Nesting}

\paragraph{Proof of Theorem~\ref{thm:exists}.}
By Nash's existence theorem~\citep{nash1950equilibrium}, every finite normal-form game has a mixed Nash equilibrium
$\sigma=(\sigma_i)_{i\in\mathcal N}$.  Let
$\bar\rho=\bigotimes_{i\in\mathcal N}\sigma_i$ be its product joint
distribution.  For any $i$, recommended action $a_i$, and deviation
$\widetilde a_i\ne a_i$,
\begin{align*}
R_{i,a_i\rightarrow\widetilde a_i}(\bar\rho)
&=\sigma_i(a_i)
\sum_{a_{-i}\in A_{-i}}
\left(\prod_{j\ne i}\sigma_j(a_j)\right)
\left[u_i(\widetilde a_i,a_{-i})-u_i(a_i,a_{-i})\right].
\end{align*}
If $\sigma_i(a_i)>0$, the bracketed expectation is nonpositive because
$a_i$ is a best response to $\sigma_{-i}$; if $\sigma_i(a_i)=0$, the
entire expression is zero.  Thus $\bar\rho\in K_0\subseteq K_\gamma$, so
$K_\gamma$ is nonempty.  It is the intersection of the probability
simplex with finitely many closed half-spaces and is therefore compact and
convex.  Shannon entropy is continuous and strictly concave on the simplex
under the convention $0\log 0=0$.  It consequently attains its maximum on
$K_\gamma$, and strict concavity makes that maximizer unique.

\paragraph{Proof of Proposition~\ref{prop:nesting}.}
A scalar slack is a per-player slack with $\epsilon_i=\epsilon$ for every
$i$.  A per-player slack is a per-action slack with
$\gamma_{i,a_i}=\epsilon_i$ for every $a_i\in A_i$.  Under either
embedding, every relaxed-obedience right-hand side is unchanged, and hence
so are the feasible set and its unique maximum-entropy element.  This proves
both inclusions. 

\subsection{Exponential-Family Characterization}
\label{app:exp-proof}

The indicator in the following result makes explicit which recommendation
indexes each obedience multiplier.  It also records the correct
characterization when the entropy maximizer is on a face of the simplex.

For fixed $\gamma$, define its attainable support by
\begin{equation}
\mathcal S_\gamma
:=\left\{x\in A:\text{there exists }q\in K_\gamma
\text{ with }q(x)>0\right\}.
\label{eq:attainable-support}
\end{equation}

\begin{proposition}[Support-aware exponential form]
\label{prop:appendix-support-form}
Let $\rho_\gamma^\star$ be the unique maximum-entropy element of
$K_\gamma$.  Then
$\operatorname{supp}(\rho_\gamma^\star)=\mathcal S_\gamma$.  Moreover,
there exist multipliers
$\lambda_{i,b_i,\widetilde b_i}\ge0$ such that, for every
$x\in\mathcal S_\gamma$,
\begin{equation}
\rho_\gamma^\star(x)
=\frac{1}{Z_{\mathcal S_\gamma}(\lambda)}
\exp\!\left\{
\sum_{(i,b_i,\widetilde b_i)\in\mathcal D}
\lambda_{i,b_i,\widetilde b_i}
\mathbf 1\{x_i=b_i\}
\left[u_i(b_i,x_{-i})-u_i(\widetilde b_i,x_{-i})\right]
\right\},
\label{eq:appendix-support-gibbs}
\end{equation}
where $Z_{\mathcal S_\gamma}(\lambda)$ normalizes over
$\mathcal S_\gamma$.  For $x\notin\mathcal S_\gamma$,
$\rho_\gamma^\star(x)=0$.  The multipliers obey
\begin{equation}
\lambda_{i,b_i,\widetilde b_i}
\left[
R_{i,b_i\rightarrow\widetilde b_i}(\rho_\gamma^\star)
-\gamma_{i,b_i}
\right]=0.
\label{eq:appendix-complementarity}
\end{equation}
The distribution is unique, although the multiplier vector need not be.
If $\mathcal S_\gamma=A$, Equation~\eqref{eq:appendix-support-gibbs}
reduces to Proposition~\ref{prop:exp}.
\end{proposition}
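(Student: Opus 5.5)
The proof has two parts: first the support identity, then the Gibbs form by KKT on the face of the simplex indexed by $\mathcal S_\gamma$.

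\emph{Support.} The inclusion $\operatorname{supp}(\rho_\gamma^\star)\subseteq\mathcal S_\gamma$ is immediate from the definition \eqref{eq:attainable-support}. For the reverse inclusion, $K_\gamma$ is convex, so for each $x\in\mathcal S_\gamma$ we pick a witness $q_x\in K_\gamma$ with $q_x(x)>0$. The average $\bar q:=|\mathcal S_\gamma|^{-1}\sum_x q_x$ lies in $K_\gamma$ and has support exactly $\mathcal S_\gamma$. Suppose $\rho^\star:=\rho_\gamma^\star$ vanishes at some $x\in\mathcal S_\gamma$, and set $\rho_t=(1-t)\rho^\star+t\bar q\in K_\gamma$. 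The one-sided derivative of the entropy $H(\rho_t)$ at $t=0^+$ contains the term $-\bar q(x)\log t\to+\infty$ from the coordinate $x$; every other coordinate contributes a bounded amount or another $+\infty$ term of the same sign. Hence $H(\rho_t)>H(\rho^\star)$ for small $t>0$, contradicting optimality. This is the standard infinite-slope argument.

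\emph{Gibbs form.} Restrict the problem to the variables $\rho(x)$, $x\in\mathcal S_\gamma$, with $\rho(x)=0$ fixed off $\mathcal S_\gamma$. The restricted problem has the same optimizer, and that optimizer lies in the relative interior of the reduced simplex. The constraints are linear, so the linearity constraint qualification holds and KKT multipliers exist: $\lambda\ge0$ for $C\rho\le D_{\mathrm{pa}}\gamma$, $\nu$ for the normalization, and $\mu\ge0$ for $\rho\ge0$. Complementary slackness gives $\mu_x=0$ on $\mathcal S_\gamma$. Stationarity for $x\in\mathcal S_\gamma$ then reads $\log\rho(x)+1+(C^{\mathsf T}\lambda)_x+\nu=0$. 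By \eqref{eq:obedience-matrix},
\[
(C^{\mathsf T}\lambda)_x=\sum_{(i,b_i,\widetilde b_i)}\lambda_{i,b_i,\widetilde b_i}\mathbf 1\{x_i=b_i\}\left[u_i(\widetilde b_i,x_{-i})-u_i(b_i,x_{-i})\right].
\]
Exponentiating and absorbing $\nu$ into $Z_{\mathcal S_\gamma}$ gives \eqref{eq:appendix-support-gibbs}. Complementary slackness for the obedience rows is \eqref{eq:appendix-complementarity}. The reduced objective is not differentiable where $\rho(x)=0$; the support step is what removes this issue, since the optimizer sits where the objective is smooth.

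\emph{Uniqueness and the full-support case.} Uniqueness of the distribution is Theorem~\ref{thm:exists}. Non-uniqueness of $\lambda$ only needs a remark: the rows of $C$ restricted to $\mathcal S_\gamma$ may be linearly dependent, so several multiplier vectors can give the same $C^{\mathsf T}\lambda$. When $\mathcal S_\gamma=A$ the formula is exactly Proposition~\ref{prop:exp}.

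The main obstacle is the reverse support inclusion, specifically handling the $0\log0$ boundary rigorously. I would write the directional derivative explicitly as
\[
\sum_y(\rho^\star(y)-\bar q(y))(\log\rho_t(y)+1),
\]
and show it diverges to $+\infty$ as $t\to0^+$.
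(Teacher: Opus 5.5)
Your proposal is correct and follows essentially the same route as the paper. You average witnesses to get a feasible point supported on all of $\mathcal S_\gamma$, mix it into $\rho_\gamma^\star$ and use the unbounded slope of the entropy at $t=0^+$ to rule out zeros on $\mathcal S_\gamma$, then apply KKT on the face indexed by $\mathcal S_\gamma$, which is justified because the constraints are linear; your explicit derivative $\sum_y(\rho^\star(y)-\bar q(y))(\log\rho_t(y)+1)\to+\infty$ is a more detailed rendering of the paper's ``order $-t\log t$ versus order $t$'' comparison.
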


\begin{proof}
For each $x\in\mathcal S_\gamma$, choose
$q^x\in K_\gamma$ with $q^x(x)>0$.  Their average is feasible and assigns
positive probability to every profile in $\mathcal S_\gamma$.  Suppose
instead that $\rho_\gamma^\star(x)=0$ for some
$x\in\mathcal S_\gamma$.  Mixing $\rho_\gamma^\star$ with a feasible
distribution that assigns positive probability to $x$ preserves
feasibility.  As the mixing weight $t$ decreases to zero, the new
$x$-coordinate contributes a term of order $-t\log t$ to entropy, whereas
the changes on the positive coordinates are only of order $t$.  Entropy
therefore increases for all sufficiently small $t>0$, contradicting
optimality.  Hence the optimizer is positive on all of
$\mathcal S_\gamma$ and zero outside it by the definition of attainable
support.

Restrict the optimization problem to the face supported on
$\mathcal S_\gamma$.  Negative entropy is differentiable there at the
optimizer.  Because the feasible set is polyhedral, its first-order normal
cone is generated by the normalization equality and the active obedience
inequalities.  Consequently, there are
$\nu\in\Real$ and $\lambda\ge0$ for which stationarity gives, for every
$x\in\mathcal S_\gamma$,
\begin{equation}
\log\rho_\gamma^\star(x)+1+\nu
+\sum_{(i,b_i,\widetilde b_i)\in\mathcal D}
\lambda_{i,b_i,\widetilde b_i}
\mathbf 1\{x_i=b_i\}
\left[u_i(\widetilde b_i,x_{-i})-u_i(b_i,x_{-i})\right]
=0.
\label{eq:appendix-gibbs-stationarity}
\end{equation}
Solving for $\rho_\gamma^\star(x)$ and absorbing $1+\nu$ into the
normalizing constant yields Equation~\eqref{eq:appendix-support-gibbs}.
Dual feasibility and complementary slackness give
Equation~\eqref{eq:appendix-complementarity}.  Strict concavity of entropy
establishes uniqueness of the distribution.
\end{proof}

\subsection{Compact Slack Domain and Canonical Slacks}

Recall the maximum unilateral gain
\[
\overline d_{i,a_i}
=\max\!\left\{0,
\max_{\substack{\widetilde a_i\ne a_i\\a_{-i}\in A_{-i}}}
\left[u_i(\widetilde a_i,a_{-i})-u_i(a_i,a_{-i})\right]
\right\}
\]
and the box
$\Gamma_{\mathrm{pa}}
=\prod_{(i,a_i)\in\mathcal I}[0,\overline d_{i,a_i}]$.

\begin{proposition}[Clipping is prediction preserving]
\label{prop:appendix-clipping}
For any $\gamma\in\Real_+^{d_{\mathrm{pa}}}$, define
\[
\widetilde\gamma_{i,a_i}
:=\min\{\gamma_{i,a_i},\overline d_{i,a_i}\}.
\]
Then $\widetilde\gamma\in\Gamma_{\mathrm{pa}}$,
$K_{\widetilde\gamma}=K_\gamma$, and
$\rho_{\widetilde\gamma}^\star=\rho_\gamma^\star$.
Thus restricting the outer problem to $\Gamma_{\mathrm{pa}}$ removes no
attainable CQRE-pa prediction.
\end{proposition}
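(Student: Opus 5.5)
The plan is to prove the three claims in order. The first, $\widetilde\gamma\in\Gamma_{\mathrm{pa}}$, is immediate: $\gamma_{i,a_i}\ge0$ and $\overline d_{i,a_i}\ge0$ give $0\le\min\{\gamma_{i,a_i},\overline d_{i,a_i}\}\le\overline d_{i,a_i}$ for every $(i,a_i)\in\mathcal I$. The third claim follows from the second. Once $K_{\widetilde\gamma}=K_\gamma$, the entropy-maximization problem~\eqref{eq:cqre} is literally the same problem, so Theorem~\ref{thm:exists} says its unique maximizer is the same point. The last sentence then follows because every prediction $\rho_\gamma^\star$ with $\gamma\ge0$ equals $\rho_{\widetilde\gamma}^\star$ with $\widetilde\gamma\in\Gamma_{\mathrm{pa}}$.

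The substance is therefore the set equality $K_{\widetilde\gamma}=K_\gamma$. Since $\widetilde\gamma\le\gamma$ componentwise, the monotonicity noted after Proposition~\ref{prop:nesting} gives $K_{\widetilde\gamma}\subseteq K_\gamma$. For the reverse inclusion, take $\rho\in K_\gamma$ and fix a constraint $(i,a_i,\widetilde a_i)$. If $\gamma_{i,a_i}\le\overline d_{i,a_i}$, the right-hand side is unchanged and the constraint holds as before. Otherwise $\widetilde\gamma_{i,a_i}=\overline d_{i,a_i}$, and it suffices to show $R_{i,a_i\rightarrow\widetilde a_i}(\rho)\le\overline d_{i,a_i}$ for every $\rho\in\Delta(A)$.

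That bound is the key step. Each bracket in \eqref{eq:deviation-gain} satisfies $u_i(\widetilde a_i,a_{-i})-u_i(a_i,a_{-i})\le\overline d_{i,a_i}$ by the definition \eqref{eq:max-deviation}. Summing against the nonnegative weights gives
\[
R_{i,a_i\rightarrow\widetilde a_i}(\rho)\le\overline d_{i,a_i}\,\rho_i(a_i)\le\overline d_{i,a_i},
\]
where the last inequality uses $\rho_i(a_i)\le1$ together with $\overline d_{i,a_i}\ge0$. This is also where the outer $\max\{0,\cdot\}$ in \eqref{eq:max-deviation} matters. So every constraint of $K_{\widetilde\gamma}$ holds at $\rho$, and $K_\gamma\subseteq K_{\widetilde\gamma}$.

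I expect no real obstacle; the only care needed is this sign/weight bookkeeping in the key step, namely that the marginal weight $\rho_i(a_i)$ is at most one and that $\overline d_{i,a_i}$ is nonnegative even when no deviation is ever profitable. If payoffs have been rescaled by $s_{\mathcal G}$, the same argument applies verbatim to the rescaled $u_i$ and the correspondingly rescaled $\overline d_{i,a_i}$.
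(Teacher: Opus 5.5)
Your proposal is correct and follows essentially the same route as the paper: the key step in both is the bound $R_{i,a_i\rightarrow\widetilde a_i}(\rho)\le\overline d_{i,a_i}\,\rho_i(a_i)\le\overline d_{i,a_i}$ for every $\rho\in\Delta(A)$. This makes each clipped constraint redundant, so the feasible set is unchanged and uniqueness of the maximum-entropy solution gives $\rho_{\widetilde\gamma}^\star=\rho_\gamma^\star$. You also spell out the membership $\widetilde\gamma\in\Gamma_{\mathrm{pa}}$ and the two set inclusions a bit more explicitly than the paper does, which is fine.
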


\begin{proof}
For any obedience row $d=(i,a_i,\widetilde a_i)$ and any
$\rho\in\Delta(A)$,
\begin{align*}
(C\rho)_d
&=\sum_{a_{-i}}\rho(a_i,a_{-i})
\left[u_i(\widetilde a_i,a_{-i})-u_i(a_i,a_{-i})\right]\\
&\le \overline d_{i,a_i}
\sum_{a_{-i}}\rho(a_i,a_{-i})
\le \overline d_{i,a_i}.
\end{align*}
Therefore, once the coordinate $\gamma_{i,a_i}$ reaches
$\overline d_{i,a_i}$, every obedience inequality replicated from that
coordinate is satisfied by every distribution in the simplex.  Replacing a
larger coordinate by $\overline d_{i,a_i}$ consequently leaves the feasible
set unchanged.  Applying this argument to every coordinate proves the
claim, and equality of the unique inner solutions follows immediately.
\end{proof}

A useful prediction-preserving representative of outer slack vector is obtained directly from a fitted
joint distribution.  Define
\begin{equation}
\gamma^{\mathrm{can}}_{i,a_i}(\rho)
:=\max\!\left\{0,
\max_{\widetilde a_i\ne a_i}
R_{i,a_i\rightarrow\widetilde a_i}(\rho)
\right\}.
\label{eq:canonical-slack}
\end{equation}

\begin{proposition}[Componentwise-minimal representative]
\label{prop:canonical-slack}
If $\rho=\rho_\gamma^\star$, then
$\rho=\rho_{\gamma^{\mathrm{can}}(\rho)}^\star$.
Moreover, $\gamma^{\mathrm{can}}(\rho)$ is the componentwise-smallest
nonnegative slack vector for which $\rho$ is feasible.
\end{proposition}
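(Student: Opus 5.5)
The proof splits into two parts: first the minimality characterization, then the identity $\rho=\rho^\star_{\gamma^{\mathrm{can}}(\rho)}$, which follows from minimality by a sandwich argument between two nested feasible sets.

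\textbf{Minimality.} Write $\gamma^{\mathrm{can}}:=\gamma^{\mathrm{can}}(\rho)$. By definition $\gamma^{\mathrm{can}}\ge0$, and for each triple $(i,a_i,\widetilde a_i)$,
\[
R_{i,a_i\rightarrow\widetilde a_i}(\rho)\le \max_{\widetilde b_i\ne a_i}R_{i,a_i\rightarrow\widetilde b_i}(\rho)\le \gamma^{\mathrm{can}}_{i,a_i}.
\]
Hence $C\rho\le D_{\mathrm{pa}}\gamma^{\mathrm{can}}$, so $\rho\in K_{\gamma^{\mathrm{can}}}$.

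Conversely, let $\gamma'\ge0$ be any slack vector with $\rho\in K_{\gamma'}$. Then $\gamma'_{i,a_i}\ge R_{i,a_i\rightarrow\widetilde a_i}(\rho)$ for every $\widetilde a_i\ne a_i$, and also $\gamma'_{i,a_i}\ge0$. Taking the maximum over these lower bounds gives $\gamma'_{i,a_i}\ge\gamma^{\mathrm{can}}_{i,a_i}$. So $\gamma^{\mathrm{can}}$ is the componentwise-smallest nonnegative slack making $\rho$ feasible. A smallest element is automatically unique, and it is well defined because the sets $A_i$ are finite.

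\textbf{Prediction preservation.} Suppose $\rho=\rho^\star_\gamma$. Then $\rho\in K_\gamma$, so minimality gives $\gamma^{\mathrm{can}}\le\gamma$ componentwise. By the monotonicity observation after Proposition~\ref{prop:nesting}, this yields the sandwich
\[
\rho\in K_{\gamma^{\mathrm{can}}}\subseteq K_\gamma .
\]
Since $\rho$ maximizes entropy over the larger set $K_\gamma$ and belongs to the smaller set $K_{\gamma^{\mathrm{can}}}$, it also maximizes entropy over $K_{\gamma^{\mathrm{can}}}$. By Theorem~\ref{thm:exists}, the entropy maximizer over $K_{\gamma^{\mathrm{can}}}$ is unique, so $\rho^\star_{\gamma^{\mathrm{can}}}=\rho$.

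There is no substantial obstacle here. The only points needing care are the $\max\{0,\cdot\}$ truncation, which guarantees $\gamma^{\mathrm{can}}$ is a valid nonnegative slack, and the appeal to uniqueness from Theorem~\ref{thm:exists} to pass from "a maximizer" to "the" CQRE. One can also note that $\gamma^{\mathrm{can}}\in\Gamma_{\mathrm{pa}}$: the bound in the proof of Proposition~\ref{prop:appendix-clipping} gives $R_{i,a_i\rightarrow\widetilde a_i}(\rho)\le\overline d_{i,a_i}$, hence $\gamma^{\mathrm{can}}_{i,a_i}\le\overline d_{i,a_i}$.
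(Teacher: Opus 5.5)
Your proof is correct and follows essentially the same argument as the paper. Both use the sandwich $\rho\in K_{\gamma^{\mathrm{can}}(\rho)}\subseteq K_\gamma$ with uniqueness of the entropy maximizer, and both derive minimality from the obedience inequalities plus nonnegativity; the only difference is that you prove minimality first and then use it to get $\gamma^{\mathrm{can}}(\rho)\le\gamma$, which is a cosmetic reordering.
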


\begin{proof}
Feasibility of $\rho$ under $\gamma$ implies
$\gamma^{\mathrm{can}}(\rho)\le\gamma$.  By construction,
$\rho\in K_{\gamma^{\mathrm{can}}(\rho)}$, while
$K_{\gamma^{\mathrm{can}}(\rho)}\subseteq K_\gamma$.  Since $\rho$ is
the unique maximum-entropy distribution on the larger set, it remains the
unique maximum-entropy distribution on the smaller set containing it.  If
$\rho\in K_\eta$ for any other nonnegative $\eta$, each obedience
inequality requires
$\eta_{i,a_i}\ge
R_{i,a_i\rightarrow\widetilde a_i}(\rho)$, and nonnegativity then gives
$\eta\ge\gamma^{\mathrm{can}}(\rho)$.
\end{proof}

\subsection{Continuity and Existence of an Outer Minimizer}
\label{app:outer-theory}

The global outer objective need not be differentiable everywhere, and its
minimizing slack vector need not be unique.  The following result gives the
existence statement required by Equation~\eqref{eq:outer-problem} without
asserting either property.

\begin{proposition}[Continuity of the inner prediction and outer existence]
\label{prop:appendix-outer-existence}
The correspondence $\gamma\mapsto K_\gamma$ is nonempty,
compact-valued, and continuous on $\Gamma_{\mathrm{pa}}$.  The unique
inner prediction $\gamma\mapsto\rho_\gamma^\star$ is continuous.  For
fixed $\hat p\in\Delta(A)$, the extended-real loss
\[
L_{\mathrm{pa}}(\gamma)
=\KL(\hat p\Vert\rho_\gamma^\star)
\in[0,+\infty]
\]
is lower semicontinuous on $\Gamma_{\mathrm{pa}}$ and attains a finite
minimum there.  The set of outer minimizers is therefore nonempty and
compact, but it need not be a singleton.
\end{proposition}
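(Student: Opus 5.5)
The plan has four steps: properties of the correspondence $\gamma\mapsto K_\gamma$, continuity of the inner optimizer via Berge, lower semicontinuity of the loss, and existence of a finite minimum on the compact box $\Gamma_{\mathrm{pa}}$.

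\emph{Step 1: the correspondence.} Nonemptiness follows from Theorem~\ref{thm:exists}, since $K_0\subseteq K_\gamma$ for every $\gamma\ge0$. Compactness is immediate, because each $K_\gamma$ is a closed subset of the simplex.

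Upper hemicontinuity holds because the graph $\{(\gamma,\rho):\rho\in\Delta(A),\ C\rho\le D_{\mathrm{pa}}\gamma\}$ is closed and $\Delta(A)$ is compact.

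Lower hemicontinuity is the step I expect to be the main obstacle. Linear-inequality systems can fail to be lower hemicontinuous when the right-hand side shrinks and a face collapses. My first attempt exploits the fixed point $\bar\rho\in K_0$, which is feasible for every $\gamma$. Take $\gamma^n\to\gamma$ and $\rho\in K_\gamma$, and set $\rho^n=(1-t_n)\rho+t_n\bar\rho$. Then
\[
C\rho^n\le(1-t_n)D_{\mathrm{pa}}\gamma ,
\]
and this needs to be $\le D_{\mathrm{pa}}\gamma^n$. That works when $\gamma>0$ componentwise, by taking $t_n\to0$ suitably. It fails on coordinates where $\gamma_{i,a_i}=0$ but $\gamma^n_{i,a_i}$ is only approximately zero from above; there is no problem in that direction, since larger slack enlarges the set. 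The genuinely hard direction is $\gamma^n<\gamma$ on some coordinate where $\gamma$ is small.

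For the general case I would fall back on Hoffman's error bound for polyhedra with fixed matrix $[C;\,-I;\,\mathbf 1^{\mathsf T}]$ and varying right-hand side. It gives
\[
\operatorname{dist}(\rho,K_{\gamma^n})\le \kappa\,\|(C\rho-D_{\mathrm{pa}}\gamma^n)_+\|\le \kappa\|D_{\mathrm{pa}}(\gamma-\gamma^n)\|\to0 ,
\]
whenever $K_{\gamma^n}\neq\emptyset$, which always holds. This yields the projection points $\rho^n$ and hence lower hemicontinuity. In fact Hoffman shows $K_\gamma$ is Lipschitz in Hausdorff distance.

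\emph{Step 2: continuity of the inner prediction.} Negative entropy is continuous on $\Delta(A)$ and does not depend on $\gamma$. Berge's maximum theorem therefore makes the argmax correspondence upper hemicontinuous. Since the argmax is single-valued by Theorem~\ref{thm:exists}, it is continuous.

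\emph{Step 3: lower semicontinuity of the loss.} The map $q\mapsto\KL(\hat p\Vert q)$ is lower semicontinuous on $\Delta(A)$ with values in $[0,\infty]$. Each term $\hat p(x)\log(\hat p(x)/q(x))$ is continuous into $(-\infty,\infty]$ and the sum is finite. Composing with the continuous map $\gamma\mapsto\rho^\star_\gamma$ gives lower semicontinuity of $L_{\mathrm{pa}}$.

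\emph{Step 4: existence of a finite minimum.} A lower semicontinuous function attains its minimum on the compact box $\Gamma_{\mathrm{pa}}$. Finiteness comes from the corner $\gamma=(\overline d_{i,a_i})$: there $K_\gamma=\Delta(A)$, so $\rho^\star$ is uniform with full support and $L_{\mathrm{pa}}\le\log|A|<\infty$.

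The minimizer set is a sublevel set $\{L\le \min L\}$ of a lower semicontinuous function, so it is closed in a compact set and hence compact. Non-uniqueness is already illustrated by Proposition~\ref{prop:appendix-clipping} and Proposition~\ref{prop:canonical-slack}, since different slacks yield the same $\rho^\star$. For instance, if $\hat p$ is uniform, every $\gamma$ that makes all constraints slack is optimal.
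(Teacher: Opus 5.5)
Your proof is correct. Steps 2--4 match the paper: Berge's theorem plus single-valuedness, lower semicontinuity of $q\mapsto\KL(\hat p\Vert q)$ composed with a continuous map, finiteness at the corner where $\rho^\star_\gamma$ is uniform, and closedness of the argmin set. The genuine difference is lower hemicontinuity, where you use Hoffman's error bound; the paper instead completes exactly the mixing argument you set aside.

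Your diagnosis of why mixing fails is mistaken. The paper takes $t_n:=\max_{g:\gamma_g>0}(1-\gamma^n_g/\gamma_g)_+$, with $t_n=0$ if $\gamma=0$. This gives $(1-t_n)\gamma\le\gamma^n$ on every coordinate: trivially where $\gamma_g=0$, since $\gamma^n\ge0$, and by construction where $\gamma_g>0$. Moreover $t_n\to0$, because $\gamma$ is fixed and only finitely many ratios $\gamma^n_g/\gamma_g\to1$ are involved. How small a positive $\gamma_g$ is therefore plays no role in a pointwise sequential argument. Together with $Cq\le0$ for $q\in K_0$, this yields $C\rho^n\le(1-t_n)D_{\mathrm{pa}}\gamma\le D_{\mathrm{pa}}\gamma^n$ directly.

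Each route buys something different. Mixing is elementary and self-contained: it needs only one point of $K_0$, supplied by Nash's theorem. Its quantitative bound, $\Vert\rho^n-\rho\Vert_1\lesssim\Vert\gamma^n-\gamma\Vert_\infty/\min_{g:\gamma_g>0}\gamma_g$, degenerates near coordinates with small positive slack. Hoffman's bound imports an external theorem but gives a stronger, uniform result: $\gamma\mapsto K_\gamma$ is Lipschitz in Hausdorff distance on $\Gamma_{\mathrm{pa}}$, with a constant depending only on $C$ and $D_{\mathrm{pa}}$.

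Your uniform-$\hat p$ example is a useful addition, since the paper asserts non-uniqueness of the outer minimizer without exhibiting an instance. It works whenever some $\overline d_{i,a_i}>0$. The uniform distribution satisfies $R_{i,a_i\rightarrow\widetilde a_i}\le\overline d_{i,a_i}/|A_i|$, so it stays feasible, and hence optimal, for a nondegenerate range of $\gamma_{i,a_i}$.
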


\begin{proof}
Nonemptiness was established in the proof of Theorem~\ref{thm:exists}, and
compactness follows because $K_\gamma$ is closed in the simplex.  The graph
of the correspondence is closed: if
$\gamma^n\to\gamma$, $\rho^n\to\rho$, and
$\rho^n\in K_{\gamma^n}$, passage to the limit in the linear constraints
gives $\rho\in K_\gamma$.  Compactness of the common ambient simplex then
gives upper hemicontinuity.

For lower hemicontinuity, fix $\rho\in K_\gamma$ and choose any
$q\in K_0$.  Given $\gamma^n\to\gamma$, define
\[
t_n:=\max_{g:\,\gamma_g>0}
\left(1-\frac{\gamma_g^n}{\gamma_g}\right)_+,
\]
with $t_n=0$ if every coordinate of $\gamma$ is zero.  Then $t_n\to0$
and $(1-t_n)\gamma\le\gamma^n$.  The mixture
$\rho^n=(1-t_n)\rho+t_nq$ converges to $\rho$ and satisfies
\[
C\rho^n
\le (1-t_n)D_{\mathrm{pa}}\gamma
\le D_{\mathrm{pa}}\gamma^n,
\]
so $\rho^n\in K_{\gamma^n}$.  Hence the feasible-set correspondence is
also lower hemicontinuous.

Negative entropy is continuous, and its minimizer over each $K_\gamma$ is
unique. Berge's maximum theorem~\citep{berge1963espaces} therefore implies continuity of
$\rho_\gamma^\star$.  For fixed $\hat p$, KL divergence is lower
semicontinuous in its second argument when assigned value $+\infty$ if
$\hat p(x)>0$ but the second argument is zero.  Its composition with the
continuous inner prediction is consequently lower semicontinuous.  At the
upper corner $\gamma_{i,a_i}=\overline d_{i,a_i}$, all obedience
constraints are redundant, so the inner prediction is uniform and the KL
loss is finite.  A finite minimum is therefore attained on the compact box
$\Gamma_{\mathrm{pa}}$.  Lower semicontinuity also makes the argmin set
closed, hence compact. 
\end{proof}

\subsection{Implicit Differentiation of the Inner KKT System}
\label{app:implicit-gradient}

We next give the local derivative used by the continuous optimizer.  Fix
$\gamma\in\Gamma_{\mathrm{pa}}$, write
$\rho^\star:=\rho_\gamma^\star$, and let
\[
S:=\left\{d\in\mathcal D:
(C\rho^\star)_d=(D_{\mathrm{pa}}\gamma)_d\right\}
\]
be the binding obedience rows.  Let $A_S:=C_{S,:}$ and
$D_S:=(D_{\mathrm{pa}})_{S,:}$ denote the rows indexed by $S$ of the
main-text obedience matrix $C$ and $D_{\mathrm{pa}}$, respectively, and define
\begin{equation}
C_S:=
\begin{bmatrix}
\mathbf 1_{|A|}^{\mathsf T}\\ A_S
\end{bmatrix},
\qquad
G_S:=
\begin{bmatrix}
0_{1\times d_{\mathrm{pa}}}\\ D_S
\end{bmatrix},
\qquad
W:=\operatorname{diag}(\rho^\star).
\label{eq:appendix-kkt-matrices}
\end{equation}

\begin{proposition}[Local CQRE-pa hypergradient]
\label{prop:appendix-implicit-gradient}
Assume that $\gamma$ is interior with respect to every nondegenerate slack coordinate and that (i) $\rho^\star$ has full support; (ii) the inactive obedience inequalities are strict and the active multipliers are positive; (iii) $C_S$ has full row rank; and (iv) the same active set remains valid locally.  Then the inner prediction is differentiable in a neighborhood of $\gamma$, with
\begin{equation}
J_\gamma
:=\frac{\partial\rho_\gamma^\star}{\partial\gamma}
=WC_S^{\mathsf T}
\left(C_S WC_S^{\mathsf T}\right)^{-1}
G_S.
\label{eq:appendix-implicit-jacobian}
\end{equation}
If $\rho^\star(x)>0$ for every $x$ with $\hat p(x)>0$, then
\begin{equation}
\nabla_\gamma L_{\mathrm{pa}}(\gamma)
=-J_\gamma^{\mathsf T}
\left(\hat p\oslash\rho^\star\right).
\label{eq:appendix-outer-gradient}
\end{equation}
These formulas are local: they do not imply a closed form or a unique
solution for the outer optimization problem.
\end{proposition}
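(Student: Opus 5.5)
Under assumptions (i)--(iv) the inner problem near $\gamma$ is equivalent to the equality-constrained problem
\[
\min_{\rho}\ \sum_x \rho(x)\log\rho(x)\quad\text{s.t.}\quad C_S\rho=h_S(\gamma),\qquad h_S(\gamma):=\begin{bmatrix}1\\ D_S\gamma\end{bmatrix}=\begin{bmatrix}1\\0\end{bmatrix}+G_S\gamma .
\]
Nonnegativity constraints are inactive by full support (i). The inactive obedience rows stay strict nearby by (ii) and continuity of $\rho_\gamma^\star$ from Proposition~\ref{prop:appendix-outer-existence}. Assumption (iv) keeps $S$ fixed.

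The plan is to write the KKT system of this reduced problem and apply the implicit function theorem. Following Proposition~\ref{prop:appendix-support-form}, stationarity reads $\log\rho+\mathbf 1+C_S^{\mathsf T}\mu=0$ with $\mu=(\nu,\lambda_S)$, together with primal feasibility $C_S\rho=h_S(\gamma)$. The map $F(\rho,\mu,\gamma)$ is smooth on the open positive orthant. Its Jacobian in $(\rho,\mu)$ is the saddle-point matrix
\[
\begin{bmatrix} W^{-1} & C_S^{\mathsf T}\\ C_S & 0\end{bmatrix},
\]
since $\nabla^2\sum\rho\log\rho=W^{-1}$. This matrix is invertible because $W^{-1}\succ0$ and $C_S$ has full row rank (iii). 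The implicit function theorem then gives a smooth local solution $(\rho(\gamma'),\mu(\gamma'))$. Positivity of $\rho$ persists locally, positivity of the multipliers persists by (ii), and the inactive rows remain strict. Hence this local solution satisfies the full KKT conditions of the original convex program. By uniqueness (Theorem~\ref{thm:exists}) it equals $\rho^\star_{\gamma'}$, so $\gamma'\mapsto\rho^\star_{\gamma'}$ is differentiable near $\gamma$.

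The derivative comes from differentiating the system:
\[
W^{-1}\dot\rho+C_S^{\mathsf T}\dot\mu=0,\qquad C_S\dot\rho=G_S .
\]
The first equation gives $\dot\rho=-WC_S^{\mathsf T}\dot\mu$. Substituting into the second gives $-C_SWC_S^{\mathsf T}\dot\mu=G_S$. The matrix $C_SWC_S^{\mathsf T}$ is positive definite by (iii) and $W\succ0$. Solving for $\dot\mu$ and back-substituting yields \eqref{eq:appendix-implicit-jacobian}.

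For the loss, $L_{\mathrm{pa}}(\gamma)=\sum_{\hat p(x)>0}\hat p(x)[\log\hat p(x)-\log\rho^\star_\gamma(x)]$ is differentiable wherever $\rho^\star_\gamma$ is positive on $\operatorname{supp}\hat p$. The chain rule gives $\nabla L=-J_\gamma^{\mathsf T}(\hat p\oslash\rho^\star)$, with the quotient set to zero off the support of $\hat p$.

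The main obstacle is the identification step: the locally constructed smooth KKT solution must coincide with the true maximizer. This needs care with (ii), meaning strict complementarity, so that no active multiplier crosses zero and no inactive constraint becomes active. It also uses the interiority of $\gamma$ for the nondegenerate coordinates, so that perturbations in all directions stay inside $\Gamma_{\mathrm{pa}}$. Degenerate coordinates, where $\overline d_{i,a_i}=0$, do not appear in any binding row with nonzero dependence and can be treated as fixed. I would make this argument explicit via sufficiency of KKT for convex programs, rather than relying on (iv) alone.
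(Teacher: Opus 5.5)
Your proposal is correct and follows essentially the same route as the paper: the reduced KKT system with the nonnegativity constraints removed by full support, the nonsingular saddle-point Jacobian $\begin{bmatrix}W^{-1}&C_S^{\mathsf T}\\ C_S&0\end{bmatrix}$, the implicit function theorem, solving through the positive definite matrix $C_SWC_S^{\mathsf T}$, and the chain rule for the KL loss. The one difference is that you spell out why the implicit-function-theorem branch is the true maximizer: persistence of positivity and strict complementarity, then KKT sufficiency for the convex program, then uniqueness. The paper compresses this into a single sentence saying that strict complementarity keeps the active set locally, so your version is a slightly more rigorous rendering of the same argument.
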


\begin{proof}
Let $y=(\nu,\lambda_S^{\mathsf T})^{\mathsf T}$ collect the multiplier
of the normalization equality and the positive multipliers of the binding
obedience constraints.  Full support removes the nonnegativity constraints
from the local stationarity system.  The KKT equations are
\begin{align}
\log\rho^\star+\mathbf 1_{|A|}+C_S^{\mathsf T}y&=0,
\label{eq:appendix-local-stationarity}\\
C_S\rho^\star
&=\begin{bmatrix}1\\D_S\gamma\end{bmatrix}.
\label{eq:appendix-local-feasibility}
\end{align}
Their Jacobian with respect to $(\rho,y)$ is
\[
\mathcal K_S=
\begin{bmatrix}
W^{-1}&C_S^{\mathsf T}\\
C_S&0
\end{bmatrix}.
\]
This matrix is nonsingular.  Indeed, if
$W^{-1}u+C_S^{\mathsf T}v=0$ and
$C_Su=0$, multiplying the first equality by $u^{\mathsf T}$
gives $u^{\mathsf T}W^{-1}u=0$.  Positive definiteness of $W^{-1}$
implies $u=0$; full row rank of $C_S$ then implies $v=0$.
The implicit-function theorem therefore applies.  Strict inactive
inequalities and positive active multipliers are the strict-complementarity
conditions that keep the same active set locally.

Let $Z:=\partial y/\partial\gamma$.  Differentiating
Equations~\eqref{eq:appendix-local-stationarity}--
\eqref{eq:appendix-local-feasibility} gives
\[
W^{-1}J_\gamma+C_S^{\mathsf T}Z=0,
\qquad
C_SJ_\gamma=G_S.
\]
The first equality gives
$J_\gamma=-WC_S^{\mathsf T}Z$.  Substitution into the second
gives
\[
-C_SWC_S^{\mathsf T}Z=G_S.
\]
Because $W$ is positive definite and $C_S$ has full row rank,
$C_SWC_S^{\mathsf T}$ is positive definite and
invertible.  Solving for $Z$ and substituting back proves
Equation~\eqref{eq:appendix-implicit-jacobian}.  Finally,
\[
\frac{\partial}{\partial\rho}
\KL(\hat p\Vert\rho)=-\hat p\oslash\rho,
\]
so the chain rule gives Equation~\eqref{eq:appendix-outer-gradient}.
\end{proof}

At active-set changes, boundary-support changes, weakly active constraints,
or failures of the rank and conditioning assumptions, the outer loss may
fail to be differentiable or Equation~\eqref{eq:appendix-implicit-jacobian}
may be numerically unreliable.  The safeguarded finite-difference fallback
used in those cases is described in Appendix~\ref{app:cqre-computation}.

\section{Computational Details}
\label{app:computation}
\label{app:cqre-computation}
\label{app:cqre-numerics}

This appendix specifies the numerical procedures used to compute CQRE for a
fixed slack vector and to estimate the slack vector from an empirical joint
distribution.  It also defines the equilibrium benchmarks used in the
experiments.  The statistical resampling procedures are described separately
in Appendix~\ref{app:inference}.

\subsection{Inner maximum-entropy solve}
\label{app:inner-solver}

For a fixed per-action slack vector $\gamma$, the numerical inner problem is
the matrix program in Equation~\eqref{eq:inner-matrix-program}.  We minimize
negative Shannon entropy using sequential least-squares programming
(SLSQP; \citealp{kraft2005slsqp}) through
\texttt{scipy.optimize.minimize}.  The implementation supplies the objective
gradient $1+\log\rho$ and imposes the normalization and relaxed-obedience
constraints directly.

To keep the logarithm numerically defined, the implementation uses the lower
bound
\[
 \rho(x)\ge 10^{-12}\qquad(x\in A).
\]
This is a numerical floor, not part of Definition~\ref{def:cqre}.  The standard
inner-solver settings are \texttt{maxiter=500} and
\texttt{ftol=$10^{-12}$}.  A cold solve begins from the uniform joint
distribution.  During continuous outer optimization, the inner solver is
warm-started from the most recent feasible joint distribution; the retained
grid incumbent is available as a fallback start.

Every returned solution is checked for finite entries, nonnegativity,
unit total mass, and satisfaction of the obedience constraints.  The default
two-player feasibility threshold is $10^{-7}$.  The three-player production
verifier uses a $2\times10^{-7}$ constraint threshold and a
$2\times10^{-8}$ probability/normalization threshold.  A candidate that does
not pass the applicable checks cannot replace an incumbent.  The final KL is
always recomputed from the selected joint distribution.

\subsection{Outer domains and discrete incumbent}
\label{app:outer-domain-computation}

For each player--action pair define
\[
 \overline d_{i,a_i}
 :=\max\!\left\{0,
 \max_{\widetilde a_i\ne a_i}\max_{a_{-i}}
 \bigl[u_i(\widetilde a_i,a_{-i})-u_i(a_i,a_{-i})\bigr]
 \right\}.
\]
 As shown in Proposition~\ref{prop:appendix-clipping}, clipping a per-action slack to
$\overline d_{i,a_i}$ cannot change the feasible set.  The continuous per-action domain
is therefore
\[
 \Gamma_{\mathrm{pa}}
 =\prod_{i,a_i}[0,\overline d_{i,a_i}].
\]
The corresponding complete domains for the restricted specifications are
\[
 \Gamma_{\mathrm{pp}}
 =\prod_i\left[0,\max_{a_i}\overline d_{i,a_i}\right],
 \qquad
 \Gamma_{\mathrm{sc}}
 =\left[0,\max_{i,a_i}\overline d_{i,a_i}\right].
\]
After the normalization above, every upper endpoint is at most one.

Before continuous refinement, we construct a discrete incumbent using
\[
 \mathcal G_0=\{0,0.005,0.01,0.02,0.05,0.1,0.2\}.
\]
Scalar and per-player searches enumerate the relevant grid whenever this is
tractable.  For a two-player $2\times2$ game, exhaustive per-action search has
$7^4=2401$ candidates.  In larger parameter spaces, the historical fitter
uses deterministic coordinate-wise grid search.  This grid solution is an initializer and a hard
incumbent; it is not the final restriction on the continuous estimator.

\subsection{Continuous outer optimization}
\label{app:continuous-outer-computation}

For each slack parameterization, the outer objective is
\[
 L_{\mathrm{pa}}(\gamma)=\KL\!\left(\hat p\,\middle\|\,
 \rho_\gamma^\star\right).
\]
We optimize this objective with box-constrained, multistart L-BFGS-B~\citep{byrd1995limited}.  The
start bank contains the discrete incumbent and feasible deterministic starts;
for a finer slack parameterization it also contains the fitted solutions
embedded from the coarser parameterizations.  All objective evaluations solve
the inner maximum-entropy problem first.  Thus the procedure never replaces
the bilevel problem by a joint optimization over $(\rho,\gamma)$.

The default continuous refiner uses \texttt{maxiter=100},
\texttt{ftol=$10^{-12}$}, \texttt{gtol=$10^{-6}$}, and
\texttt{maxls=20}.  Its default feasibility tolerance is $10^{-7}$.
Panel-specific production jobs may increase the iteration budget or add
independent rescue starts, but do not change the objective or the outer box.
The exact run manifests and software environment are included with the
reproducibility artifact.

\paragraph{Analytic-gradient eligibility.}
Let $S$ be the active obedience constraints at the current inner solution.
  Let $A_S:=C_{S,:}$ and
  $D_S:=(D_{\mathrm{pa}})_{S,:}$ denote the rows indexed by $S$, and define
  \[
   C_S=
   \begin{bmatrix}
   \mathbf 1_{|A|}^{\mathsf T}\\
   A_S
   \end{bmatrix},
   \qquad
   G_S=
   \begin{bmatrix}
   0_{1\times d_{\mathrm{pa}}}\\
   D_S
   \end{bmatrix},
   \qquad
   W=\operatorname{diag}(\rho_\gamma^\star).
  \]
On a cell with an unchanged active set, full support, strict
complementarity, and full row rank of $C_S$, implicit differentiation of the
inner KKT system gives
\[
 J_\gamma
 =\frac{\partial\rho_\gamma^\star}{\partial\gamma}
 =WC_S^{\mathsf T}
  (C_SWC_S^{\mathsf T})^{-1}G_S,
\]
and hence the gradient in Equation~\eqref{eq:outer-gradient}.  The full
derivation is given in Proposition~\ref{prop:appendix-implicit-gradient}.

The implementation uses this derivative only after checking primal
feasibility, stationarity, multiplier signs, complementarity, strictness of
inactive constraints, full support, active-row rank, numerical conditioning,
and the residual of the differentiated equality system. 

\paragraph{Finite-difference fallback.}
If any analytic-gradient check fails, we use bounded centered finite
differences, with one-sided differences at a box boundary and zero derivative
for a fixed coordinate.  The baseline relative step is
$10^{-4}\max\{1,|\gamma_j|\}$.  The audited larger-action production pipeline
also runs an independent four-start finite-difference lane with step
$1.25\times10^{-5}$ and at most 360 iterations, retaining whichever eligible
lane has lower recomputed KL.  Derivative-only probe points are not silently
selected: a materially improving feasible probe must first seed an explicit
optimization restart and pass the same candidate checks.

\paragraph{Incumbent and nesting guarantees.}
The final selector compares all eligible continuous candidates against the
unchanged grid fit.  It returns a continuous candidate only if its recomputed
KL is no larger than the incumbent up to the pipeline's stated numerical
tolerance.  Embedded coarser solutions are likewise retained when fitting a
finer parameterization.  Consequently, up to numerical tolerance,
\[
 L_{\mathrm{pa}}^{\mathrm{selected}}
 \le L_{\mathrm{pp}}^{\mathrm{selected}}
 \le L_{\mathrm{sc}}^{\mathrm{selected}},
 \qquad
 L_m^{\mathrm{selected}}\le L_m^{\mathrm{grid}}.
\]
Because the outer loss can have kinks, flat directions, and multiple local
minima, these guarantees are incumbent guarantees, not a proof of global
optimality of L-BFGS-B.

\paragraph{Canonical slack representation.}
The inner joint distribution is unique for fixed slacks, but a joint
distribution can have more than one slack representation.  For a fitted
joint $\rho$, we use the componentwise-minimal feasible representative
\[
 \gamma_{i,a_i}^{\mathrm{can}}(\rho)
 =\max\!\left\{0,
 \max_{\widetilde a_i\ne a_i}
 R_{i,a_i\rightarrow\widetilde a_i}(\rho)
 \right\}.
\]
If $\rho=\rho_\gamma^\star$, replacing $\gamma$ by
$\gamma^{\mathrm{can}}(\rho)$ preserves $\rho$: the new feasible set is a
subset of the old one that still contains $\rho$, so the old unique
maximum-entropy optimizer remains optimal.  We use canonical rather than raw
optimizer endpoints in the effective-slack-dimension analysis.

\subsection{Equilibrium benchmarks}
\label{app:benchmark-computation}

All benchmarks are scored with the same direction of KL,
$\KL(\hat p\Vert\rho)$.

\paragraph{Restricted and exact CQRE.}
Scalar CQRE and per-player CQRE use the same inner solver and continuous
outer procedure as CQRE-pa, with the replication restrictions defined in
Section~\ref{sec:cqre}.  Exact CQRE fixes $\gamma=0$ and therefore requires
only one inner maximum-entropy solve.

\paragraph{Mixed Nash.}
For the finite games in which mixed Nash is reported, equilibria are obtained
by support enumeration and validated by best-response inequalities.  If more
than one validated equilibrium is found, the table reports the product joint
distribution with the smallest in-sample KL to the target.  This is an
oracle selection among equilibria and is labeled as such; it is not a unique
equilibrium-selection rule implied by Nash equilibrium itself.

\subsection{Logit-QRE computation}
\label{app:logit-computation}
\paragraph{Principal branch.}
  For a fixed value of $\beta$, the Logit-QRE equations may have
  more than one solution. We therefore begin at $\beta=0$, where
  the unique solution is uniform play, and gradually increase
  $\beta$, using each solution to initialize the next solve. The
  continuous family of solutions reached in this way is called the
  \emph{principal branch}. For each empirical target, we select the
  point on this branch that minimizes KL over the specified
  $\beta$ domain. When Logit-QRE is unique, this is equivalent to
  ordinary one-dimensional optimization over $\beta$. For fits on
  $[0,20]$, both endpoints are explicitly considered. The
  three-player implementation instead solves from a uniform
  initialization at each candidate $\beta$ and verifies the
  fixed-point residual of the selected solution.

  \paragraph{Including $\beta=\infty$.}
  The value $\beta=\infty$ represents the limiting mixed-Nash
  prediction approached as $\beta$ grows along the principal
  branch. It is not approximated using one arbitrarily chosen
  large finite value. For the branch-audited two-player analysis,
  we continue the branch beyond $\beta=20$ and test whether its
  solutions converge toward a validated mixed-Nash equilibrium.
  The limiting equilibrium is included as the
  $\beta=\infty$ candidate only when the required distance and
  convergence checks pass. The best candidate from $[0,20]$ is
  always retained, so extending the domain to $[0,\infty]$ cannot
  worsen the fitted Logit-QRE KL. If the limiting endpoint cannot
  be validated, it is not included.

\section{Data and Experimental Protocol}
\label{app:data-protocol}
\label{app:data}

This appendix documents the four empirical panels used in the main paper:
the Selten--Chmura human benchmark, the all-12 human--human/AI--AI/human--AI
extension, the seven larger-action two-player games, and the
three-player \texttt{sm\_3p3a} game. 

\subsection{Collection interface and common LLM protocol}

The three AI panels use six logical OpenAI deployments: \texttt{gpt-4o}, \texttt{gpt-4o-mini}, \texttt{gpt-4.1}, \texttt{gpt-4.1-mini}, \texttt{gpt-5}, and \texttt{gpt-5-mini}~\citep{openai2026models}.  Each request is a stateless one-shot decision: the prompt states that play is simultaneous, supplies no history or future interaction, does not identify the other deployment, and contains no public signal.  The common system instruction is to return one uppercase action label without an explanation.  The two-player prompt displays the full payoff bimatrix and identifies whether the respondent is the row or column player.  The three-player prompt identifies the respondent's player position and displays that player's payoff for every profile of the other players; it does not display the other players' payoff functions.

For the all-12 and seven-game collections, the requested sampling settings for deployments were temperature $0.8$, top-$p$ $0.95$, and a four-token visible-output budget.   Local seeds controlled the task schedule, roles, and label permutations; the all-12 collection did not send a provider-side sampling seed.  Raw returned text and the parsed action
were both retained.  Empty or unparseable responses were retried under the frozen acceptance rule, with stable task identifiers preventing an accepted decision from being collected twice.

\subsection{Target definitions and counts}
\label{app:targets}

We call a \emph{target} one empirical joint-action distribution fitted
separately by each equilibrium model.  Table~\ref{tab:protocol-targets}
summarizes the target universe.  Target counts are derived evaluation cells;
they are not counts of independent observational units.

\begin{table*}[t]
\centering
\small
\caption{Target construction in the four empirical panels.  The primitive
unit column identifies the data from which each target is formed.}
\label{tab:protocol-targets}
\begin{tabular}{@{}p{0.18\textwidth}p{0.23\textwidth}rp{0.42\textwidth}@{}}
\hline
Panel & Target index & Targets & Primitive data and target construction \\
\hline
Selten--Chmura humans
& game
& 12
& Equal-weight mean of the released within-group joint distributions for one
  game. \\
All-12 extension
& game $\times$ ordered row agent $\times$ ordered column agent
& 588
& 432 AAI paired/reconstructed joints, 144 directional HAI marginal products,
  and 12 HH targets from released human joint distributions. \\
Seven larger-action games
& game $\times$ unordered model dyad
& 147
& Seven games $\times$ 21 model dyads; each target contains 200 one-shot
  action pairs. \\
\texttt{sm\_3p3a}
& unordered model multiset
& 56
& Every size-three multiset from the six deployments; each target contains
  200 one-shot action triples. \\
\hline
\end{tabular}
\end{table*}

\paragraph{Selten--Chmura human benchmark.}
We use the released replication workbook for the Selten--Chmura experiment~\citep{selten2008stationary,selten2008replication}. It contains twelve $2\times2$ games.  Games \texttt{SC1}--\texttt{SC6} have 12 independent subject
groups each, while \texttt{SC7}--\texttt{SC12} have 6 each, for 108 groups in
total.  A group contains four fixed-role row subjects and four fixed-role
column subjects who are randomly matched over 200 periods, producing four
realized action pairs per period and 800 pairs in total.  The workbook
releases each group's four-cell joint frequencies for periods 1--100,
101--200, and 1--200.  The main analysis uses periods 1--200.   The twelve games then receive equal weight in the headline
aggregate.

\paragraph{All-12 HH, AAI, and constructed HAI targets.}
The same twelve presentations form six payoff-translation-equivalent
strategic families, \texttt{SC1}/\texttt{SC7} through
\texttt{SC6}/\texttt{SC12}.  The analysis treats the six AI systems and the
published Selten--Chmura human population as seven agents and constructs all $7\times7=49$
ordered row-agent--column-agent cells in every game.  This gives
$12\times49=588$ targets: $12\times36=432$ AI--AI (AAI),
$12\times12=144$ directional human--AI (HAI), and 12 human--human (HH).
The HAI targets split evenly between human-row/AI-column and
AI-row/human-column orientations.

Each AAI target contains 200 one-shot action pairs. Some pairs contain two decisions collected for the same match; the remaining pairs are formed from independently collected opponent-blind decisions. Each HAI target is the outer product of a model-role-specific 200-decision AI marginal and the corresponding human marginal. HAI is therefore constructed instead of observed human--AI interaction. HH uses the released within-group human joint distributions.

The all-12 Selten--Chmura AI dataset contains 201,600 accepted one-shot decisions from six model deployments across twelve games. Preflight and pilot calls, failed requests, and retried requests are excluded from this count. Each scheduled task contributes one accepted response. Collection validation checks all scheduled tasks, all 588 target cells, action-permutation balance, and accepted task identifiers.

\paragraph{Seven larger-action games.}
The panel combines three two-player strongly-monotone constructions with
$n\in\{3,5,7\}$ actions and four rank-$r$ general-sum constructions with
$n=5$ and $r\in\{1,2,3,4\}$.  For each game, the six deployments generate the
21 unordered dyads with replacement.  Every dyad contributes
200 independent one-shot matches, yielding 147 targets, 29,400 action pairs,
and 58,800 AI decisions.

\paragraph{Three-player \texttt{sm\_3p3a}.}
This panel contains one fixed three-player, three-action payoff tensor.  The
six deployments generate all 56 size-three model multisets with
replacement.  Each multiset contributes 200 one-shot matches, for 11,200
action triples and 33,600 AI decisions.  The tuple is stored in a canonical
player order and is not rotated across player positions; the target is thus
indexed by that recorded assignment, even though the set of deployment names
is enumerated as a multiset.

\subsection{Game construction }
\label{app:game-construction}

The Selten--Chmura panels use the twelve payoff tables from the released
workbook.  In the larger-action panel, let $K$ denote the normalized
antisymmetric draw.  The strongly-monotone construction is
$R_1=0.2I_n+\sqrt{n}K$ and $R_2=0.2I_n-\sqrt{n}K$.  The rank-$r$ generator draws
fixed $U,V\in\mathbb{R}^{5\times r}$, forms $M=UV^\top$, and sets
$R_1=M+0.3\sqrt{5}K$ and $R_2=-M+0.3\sqrt{5}K$.  The committed seeds are 7,
37, 127, and 50 for $r=1,2,3,4$, respectively; the strongly-monotone games use
seed 2.  Each pair of payoff matrices is then jointly rescaled to maximum
absolute entry 5 and rounded to one decimal place before prompting.

The \texttt{sm\_3p3a} tensor uses a small player-specific diagonal preference
($\mu=0.2$) plus three fixed pairwise antisymmetric couplings (scale 1, seed
11), followed by player-wise rescaling and rounding to one decimal place.  The
identifier is retained for continuity with the experiment.

\subsection{Bootstrap and inference}
  \label{app:bootstrap}
  \label{app:inference}

  For target $t$, we report
  \[
  \Delta_t
  =
  \KL\!\left(\hat p_t\middle\|
  \widehat\rho_t^{\mathrm{CQRE\text{-}pa}}\right)
  -
  \KL\!\left(\hat p_t\middle\|
  \widehat\rho_t^{\mathrm{Logit\text{-}QRE}}\right),
  \]
  so negative values favor CQRE-pa. Paired percentile intervals use the same resampled target for both models. The Selten--Chmura human analysis uses common-precision Logit-QRE over $\beta\in[0,20]$, the all-12 analysis uses $\beta\in[0,\infty]$, and the larger-action and \texttt{sm\_3p3a} analyses use $\beta\in[0,20]$; $\beta=\infty$ denotes the limiting mixed-Nash prediction. Intervals wholly below zero favor CQRE-pa, confidence intervals wholly above zero favor Logit-QRE, and the remaining comparisons are unresolved.

\paragraph{Panel-specific resampling.}
For the Selten--Chmura human benchmark, we use $B=2000$ bootstrap replicates. Within each game, subject groups are sampled with replacement, and 800 action pairs are then drawn from each selected group's reported distribution. Both models are refitted to the resulting target; groups receive equal weight within games, and the twelve games receive equal weight in the decisions and human observations before reconstructing their product distribution; and for HH targets, we use the same subject-group procedure as above. Within each replicate, targets that depend on the same human data use the same resampled human observations. These intervals therefore exclude uncertainty from refitting the models. For the three-player analysis, we use $B=1000$ replicates, resample the 200 action triples within each target, reconstruct the 27-cell distribution, and refit both models. All reported comparisons measure descriptive fit to the observed targets instead of out-of-sample predictive performance.

\subsection{Data, Code, and Licenses}
  \label{app:asset-licenses}

  \paragraph{Asset licenses.}
The Selten--Chmura replication data are available from openICPSR (Project 113254; DOI: \url{https://doi.org/10.3886/E113254V1}). Its data are licensed under CC BY 4.0 and its accompanying software under the BSD 3-Clause license.
\section{Additional Empirical Results}
\label{app:additional-results}
\label{app:results}

This appendix reports additional results supporting
Section~\ref{sec:empirical}.  Throughout, we use
\begin{equation}
  \Delta
  :=\KL(\hat p\Vert\widehat\rho_{\mathrm{CQRE\text{-}pa}})
  -\KL(\hat p\Vert\widehat\rho_{\mathrm{Logit\text{-}QRE}}),
  \label{eq:appendix-delta}
\end{equation}
so negative values favor CQRE-pa.  Unless stated otherwise, reported
intervals are percentile intervals from synchronized bootstrap
replicates, and aggregate quantities weight targets equally.

\subsection{Established Human Benchmark}
  \label{app:selten-results}

For each Selten--Chmura game, the empirical target is the equal-weight average of its observed group-level joint distributions. CQRE-pa achieves the lowest point KL in nine of the twelve games and reduces the equal-game mean KL from $.0093$ for Logit-QRE to $.0040$, a 56.9\% reduction. Fit also improves across the nested CQRE specifications, from $.0072$ for common-slack CQRE to $.0057$ for per-player CQRE and $.0040$ for CQRE-pa. The aggregate paired interval for $\Delta$ is strictly below zero, although the game-level results remain heterogeneous. Table~\ref{tab:selten} reports the complete benchmark comparison, and Figure~\ref{fig:selten-human-kl} visualizes the CQRE-pa and Logit-QRE fits.

 \begin{table}[t]
    \centering
    \scriptsize
    \caption{Fit to the twelve Selten--Chmura human games.
    N/E denotes the identical mixed-Nash and exact-CQRE
    entries; sc., pp., and pa denote scalar, per-player,
    and per-action CQRE. The first five model columns
    report full-sample KL divergence. The final column
    reports
    $\Delta=\KL_{\mathrm{CQRE\text{-}pa}}
    -\KL_{\mathrm{Logit\text{-}QRE}}$
    with paired 95\% bootstrap intervals. Lower KL is
    better; negative $\Delta$ favors CQRE-pa.}
    \label{tab:selten}

    {\setlength{\tabcolsep}{1.2pt}
    \begin{tabular}{@{}lrrrrrr@{}}
    \hline
    & \multicolumn{5}{c}{Full-sample KL divergence}
    & \multicolumn{1}{c}{$\Delta$ [paired 95\% CI]} \\
    \cline{2-6}\cline{7-7}
    Game & N/E & Logit & sc. & pp. & pa
         & $\Delta$ [95\% CI] \\
    \hline
    SC1  & .1910 & .0183 & .0056 & .0019
         & \textbf{.0010}
         & $-.0173\,[-.0364,-.0004]$ \\
    SC2  & .0947 & .0053 & .0068 & .0051
         & \textbf{.0045}
         & $-.0008\,[-.0140,+.0140]$ \\
    SC3  & .0952 & \textbf{.0002} & .0356 & .0347
         & .0261
         & $+.0258\,[+.0124,+.0385]$ \\
    SC4  & .0344 & \textbf{.0002} & .0147 & .0147
         & .0093
         & $+.0091\,[-.0012,+.0183]$ \\
    SC5  & .0132 & \textbf{.0011} & .0062 & .0044
         & .0032
         & $+.0021\,[-.0074,+.0089]$ \\
    SC6  & .0039 & .0021 & .0017 & .0008
         & \textbf{.0005}
         & $-.0016\,[-.0053,+.0019]$ \\
    SC7  & .4286 & .0196 & .0035 & .0017
         & \textbf{.0008}
         & $-.0188\,[-.0328,-.0077]$ \\
    SC8  & .0604 & .0286 & .0008 & .0008
         & \textbf{.0007}
         & $-.0279\,[-.0516,-.0083]$ \\
    SC9  & .0345 & .0111 & .0028 & .0010
         & \textbf{.0004}
         & $-.0108\,[-.0322,+.0234]$ \\
    SC10 & .0418 & .0226 & .0045 & .0001
         & \textbf{.0000}
         & $-.0226\,[-.0369,-.0019]$ \\
    SC11 & .0159 & .0014 & .0035 & .0032
         & \textbf{.0013}
         & $-.0001\,[-.0090,+.0070]$ \\
    SC12 & .0027 & .0008 & .0008 & .0005
         & \textbf{.0003}
         & $-.0005\,[-.0041,+.0014]$ \\
    \hline
    Mean & .0847 & .0093 & .0072 & .0057
         & \textbf{.0040}
         & $\mathbf{-.0053\,[-.0092,-.0008]}$ \\
    \hline
    \end{tabular}}
  \end{table}

\begin{figure}[t]
    \centering
    \includegraphics[width=.62\textwidth]
    {\detokenize{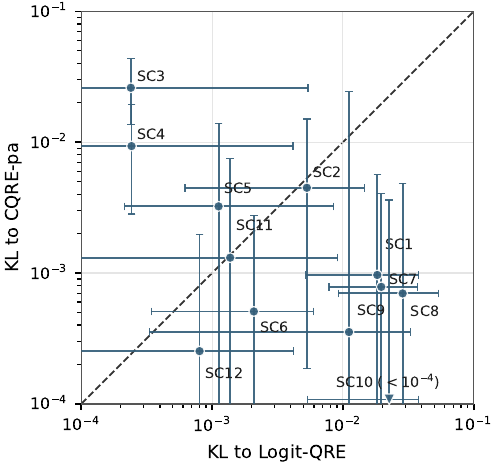}}
    \caption{Fit to the twelve Selten--Chmura human games. Points show
    full-sample KL divergence, and bars show marginal 95\% bootstrap
    intervals obtained from paired refits. Both axes use logarithmic
    scales; points below the diagonal favor CQRE-pa. Intervals extending
    below $10^{-4}$ are clipped at the plot boundary, and the CQRE-pa
    estimate for SC10 is displayed there because it is below $10^{-4}$.}
    \label{fig:selten-human-kl}
  \end{figure}

\subsection{All-12 Selten--Chmura Extension}
\label{app:all12-results}

Table~\ref{tab:app-all12-population} supplements
Table~\ref{tab:selten-populations} by reporting the bootstrap mean as
well as the point estimate.  The two directional HAI rows separate
targets in which the AI occupies the row role from those in which the
human occupies the row role.  Because every HAI target is constructed
as the product of the corresponding human and AI marginals, these rows
do not represent observed human--AI interaction.

\begin{table}[ht]
  \centering
  \small
  \setlength{\tabcolsep}{5pt}
  \caption{All-12 Selten--Chmura inference for the primary
  $\beta\in[0,\infty]$ Logit-QRE comparison.  The point column is the
  equal-target mean of~\eqref{eq:appendix-delta}.}
  \label{tab:app-all12-population}
  \begin{tabular}{@{}lrrrr@{}}
  \hline
  Group & Targets & Point $\Delta$ & Boot. mean & Paired 95\% CI \\
  \hline
  HH             & 12  & $-.00529$ & $-.00500$ & $[-.00915,-.00082]$ \\
  AAI            & 432 & $-.00727$ & $-.00725$ & $[-.00825,-.00627]$ \\
  HAI            & 144 & $+.01115$ & $+.01110$ & $[+.00975,+.01240]$ \\
  \quad AI row   & 72  & $+.01445$ & $+.01437$ & $[+.01227,+.01628]$ \\
  \quad human row& 72  & $+.00785$ & $+.00783$ & $[+.00584,+.00978]$ \\
  \hline
  All            & 588 & $-.00272$ & $-.00271$ & $[-.00354,-.00192]$ \\
  \hline
  \end{tabular}
\end{table}

 Table~\ref{tab:app-all12-by-game} reports the game-level comparisons
  separately for AAI, constructed HAI, and HH targets. Each AAI row
  averages 36 ordered targets, each HAI row averages 12 directional
  targets, and each HH row contains one human target.

  \begin{table}[ht]
    \centering
    \scriptsize
    \setlength{\tabcolsep}{5pt}
    \caption{All-12 extension by game and population under the primary
    $\beta\in[0,\infty]$ Logit-QRE specification. Negative $\Delta$
    favors CQRE-pa. Intervals are synchronized paired 95\% bootstrap
    intervals.}
    \label{tab:app-all12-by-game}

    \textit{(a) AI--AI targets}\\[2pt]
    \begin{tabular}{@{}lrrl@{}}
    \hline
    Game & Point $\Delta$ & Bootstrap mean & Paired 95\% CI \\
    \hline
    SC1  & $-.03812$ & $-.03873$ & $[-.04276,-.03482]$ \\
    SC2  & $+.00366$ & $+.00328$ & $[-.00147,+.00806]$ \\
    SC3  & $+.01903$ & $+.01911$ & $[+.01652,+.02184]$ \\
    SC4  & $+.02232$ & $+.02221$ & $[+.01962,+.02476]$ \\
    SC5  & $-.00221$ & $-.00225$ & $[-.00397,-.00044]$ \\
    SC6  & $+.01360$ & $+.01340$ & $[+.01206,+.01480]$ \\
    SC7  & $-.10924$ & $-.10804$ & $[-.11456,-.10129]$ \\
    SC8  & $-.01371$ & $-.01325$ & $[-.01645,-.00989]$ \\
    SC9  & $+.00891$ & $+.00882$ & $[+.00571,+.01179]$ \\
    SC10 & $+.00218$ & $+.00222$ & $[-.00002,+.00460]$ \\
    SC11 & $+.00951$ & $+.00935$ & $[+.00706,+.01158]$ \\
    SC12 & $-.00321$ & $-.00317$ & $[-.00477,-.00159]$ \\
    \hline
    \end{tabular}

    \vspace{5pt}
    \textit{(b) Constructed human--AI targets}\\[2pt]
    \begin{tabular}{@{}lrrl@{}}
    \hline
    Game & Point $\Delta$ & Bootstrap mean & Paired 95\% CI \\
    \hline
    SC1  & $-.00487$ & $-.00545$ & $[-.01219,-.00057]$ \\
    SC2  & $+.03206$ & $+.03214$ & $[+.02595,+.03838]$ \\
    SC3  & $+.03152$ & $+.03158$ & $[+.02673,+.03660]$ \\
    SC4  & $+.02649$ & $+.02632$ & $[+.02130,+.03081]$ \\
    SC5  & $+.00601$ & $+.00610$ & $[+.00188,+.01148]$ \\
    SC6  & $+.00838$ & $+.00848$ & $[+.00757,+.00943]$ \\
    SC7  & $-.00605$ & $-.00647$ & $[-.01058,-.00290]$ \\
    SC8  & $-.00682$ & $-.00640$ & $[-.01267,-.00138]$ \\
    SC9  & $+.01533$ & $+.01532$ & $[+.00876,+.02282]$ \\
    SC10 & $+.01168$ & $+.01136$ & $[+.00668,+.01563]$ \\
    SC11 & $+.01258$ & $+.01270$ & $[+.01011,+.01529]$ \\
    SC12 & $+.00749$ & $+.00755$ & $[+.00633,+.00893]$ \\
    \hline
    \end{tabular}

    \vspace{5pt}
    \textit{(c) Human--human targets}\\[2pt]
    \begin{tabular}{@{}lrrl@{}}
    \hline
    Game & Point $\Delta$ & Bootstrap mean & Paired 95\% CI \\
    \hline
    SC1  & $-.01735$ & $-.01650$ & $[-.03644,-.00040]$ \\
    SC2  & $-.00082$ & $-.00057$ & $[-.01404,+.01400]$ \\
    SC3  & $+.02583$ & $+.02558$ & $[+.01244,+.03847]$ \\
    SC4  & $+.00910$ & $+.00887$ & $[-.00120,+.01827]$ \\
    SC5  & $+.00211$ & $+.00142$ & $[-.00743,+.00886]$ \\
    SC6  & $-.00158$ & $-.00151$ & $[-.00528,+.00190]$ \\
    SC7  & $-.01885$ & $-.01931$ & $[-.03282,-.00774]$ \\
    SC8  & $-.02792$ & $-.02669$ & $[-.05161,-.00825]$ \\
    SC9  & $-.01077$ & $-.00859$ & $[-.03156,+.02340]$ \\
    SC10 & $-.02260$ & $-.02160$ & $[-.03679,-.00189]$ \\
    SC11 & $-.00006$ & $-.00039$ & $[-.00904,+.00695]$ \\
    SC12 & $-.00055$ & $-.00075$ & $[-.00409,+.00141]$ \\
    \hline
    \end{tabular}
  \end{table}

\subsection{Larger-Action Games}
\label{app:multigame-results}
\begin{table}[t]
    \centering
    \small
    \setlength{\tabcolsep}{5pt}
    \caption{Continuous CQRE-pa versus stored Logit-QRE fits over
    $\beta\in[0,20]$. The table reports mean KL and
    $\Delta=\KL_{\mathrm{CQRE\text{-}pa}}
    -\KL_{\mathrm{Logit\text{-}QRE}}$ with synchronized
    $B=1000$ fixed-prediction bootstrap intervals.
    Negative $\Delta$ favors CQRE-pa. C/U/L denotes CQRE-pa wins,
    unresolved comparisons, and Logit-QRE wins.}
    \label{tab:multigame-results}

    \begin{tabular}{@{}lrrrr@{}}
    \hline
    Game & CQRE-pa & Logit-QRE
         & $\Delta$ [paired 95\% CI] & C/U/L \\
    \hline
    \texttt{sm\_n3} & .6157 & 10.6130
      & $-9.9974\,[-10.3512,-9.6245]$ & 18/0/3 \\
    \texttt{sm\_n5} & .8648 & 1.0050
      & $-.1402\,[-.1438,-.1279]$ & 21/0/0 \\
    \texttt{sm\_n7} & 1.2090 & 1.3872
      & $-.1782\,[-.1837,-.1672]$ & 21/0/0 \\
    \texttt{rank1\_n5} & 1.2082 & 1.3948
      & $-.1867\,[-.1907,-.1784]$ & 21/0/0 \\
    \texttt{rank2\_n5} & 1.3358 & 1.0898
      & $+.2460\,[+.2408,+.2641]$ & 1/4/16 \\
    \texttt{rank3\_n5} & 1.2924 & 7.4076
      & $-6.1153\,[-6.2928,-5.9284]$ & 18/0/3 \\
    \texttt{rank4\_n5} & 1.0759 & 1.1351
      & $-.0592\,[-.0675,-.0476]$ & 11/8/2 \\
    \hline
    All & 1.0859 & 3.4332
      & $-2.3473\,[-2.4064,-2.2859]$ & 111/12/24 \\
    \hline
    \end{tabular}
  \end{table}

The seven-game results in Table~\ref{tab:multigame-results} use the continuous CQRE-pa point fits and the Logit-QRE fits over
$\beta\in[0,20]$.  The associated $B=1000$ intervals are a
\emph{fixed-prediction replay}: each synchronized bootstrap replicate resamples the observations and reevaluates KL against the stored predictions, without refitting either model. 

In Table~\ref{tab:multigame-results}, a CQRE-pa win means that the target-specific paired interval lies below zero, a Logit-QRE win means that it lies above zero, and all other comparisons are unresolved.

 \begin{figure}[t]
    \centering

    \includegraphics[width=.32\textwidth]
    {\detokenize{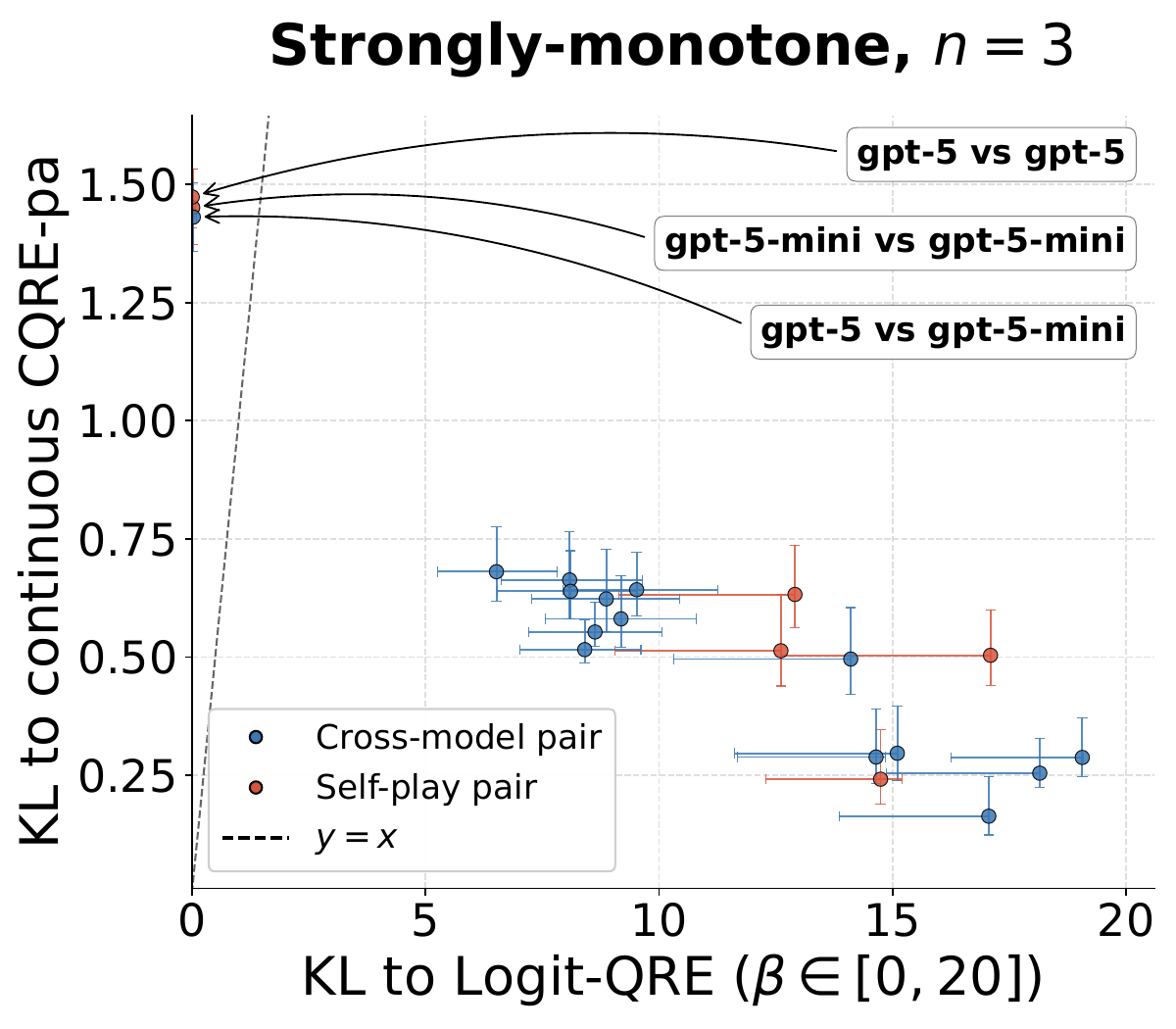}}
    \hfill
    \includegraphics[width=.32\textwidth]
    {\detokenize{figures/aistats_independent/2player/scatter_sm_n5.pdf}}
    \hfill
    \includegraphics[width=.32\textwidth]
    {\detokenize{figures/aistats_independent/2player/scatter_sm_n7.pdf}}

    \par\vspace{-1mm}

    \includegraphics[width=.32\textwidth]
    {\detokenize{figures/aistats_independent/2player/scatter_rank1_n5.pdf}}
    \hfill
    \includegraphics[width=.32\textwidth]
    {\detokenize{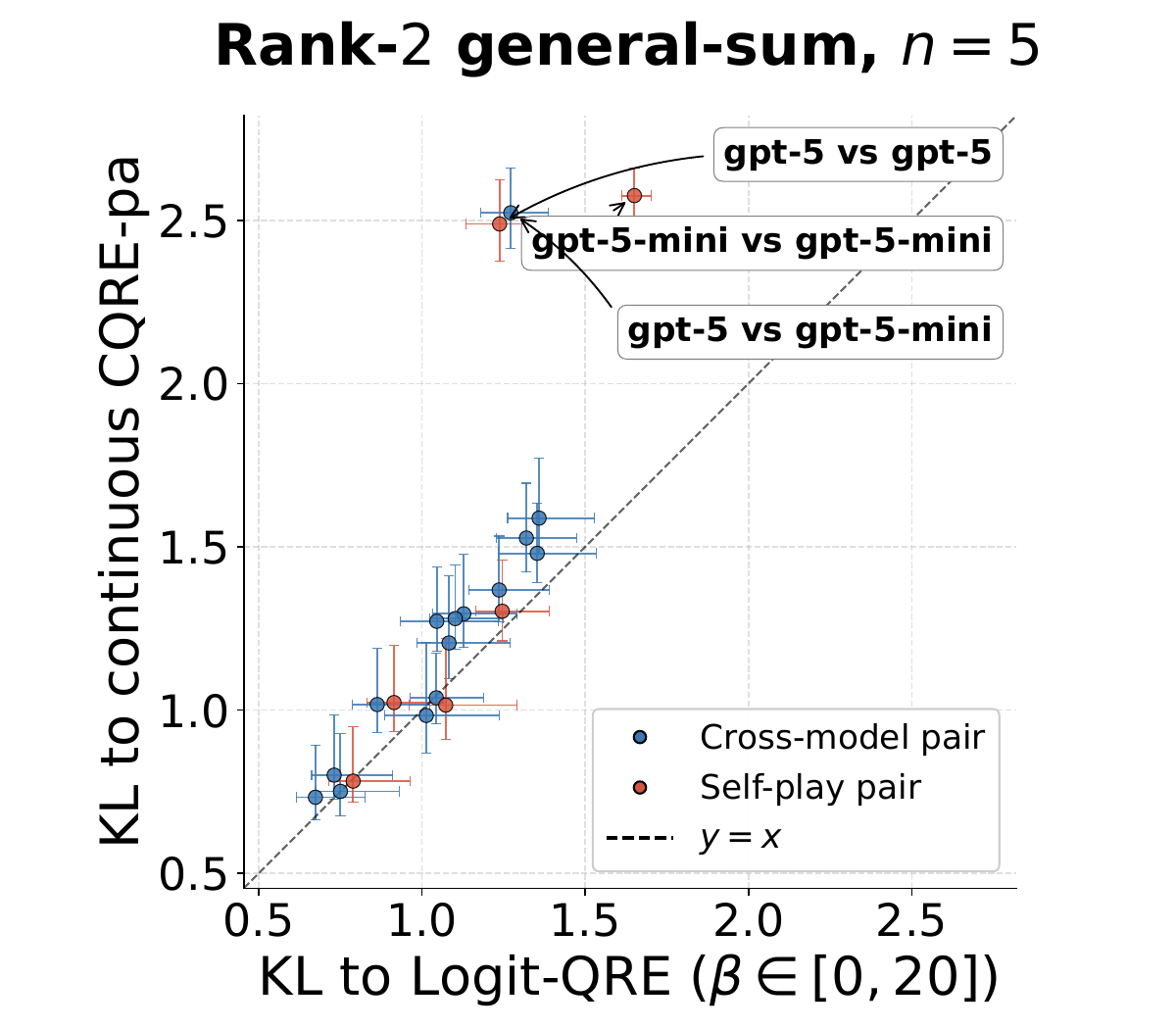}}
    \hfill
    \includegraphics[width=.32\textwidth]
    {\detokenize{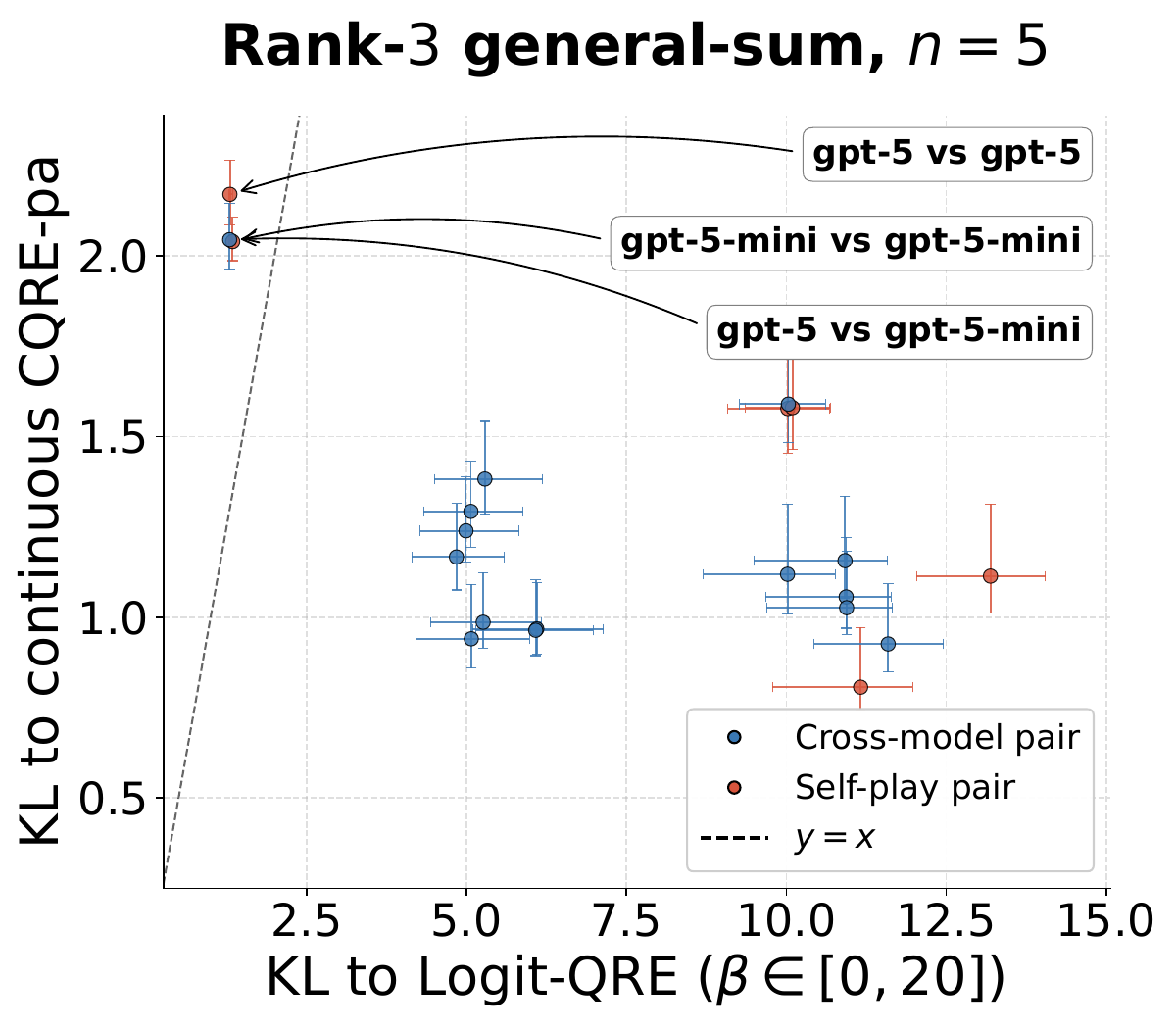}}

    \par\vspace{-1mm}

    \includegraphics[width=.32\textwidth]
    {\detokenize{figures/aistats_independent/2player/scatter_rank4_n5.pdf}}

    \caption{Larger-action comparisons using continuous CQRE-pa and
    stored Logit-QRE fits over $\beta\in[0,20]$. Each panel contains
    21 model pairings; points below the diagonal favor CQRE-pa.
    CQRE-pa has lower mean KL in six of the seven games and sweeps
    \texttt{sm\_n5}, \texttt{sm\_n7}, and \texttt{rank1\_n5};
    \texttt{rank2\_n5} is the principal inversion.}
    \label{fig:multigame-all}
  \end{figure}

  \paragraph{GPT-5-family subgroup.}
  The GPT-5-family subgroup consists of the three self/near-self dyads
  formed by \texttt{gpt-5} and \texttt{gpt-5-mini} in each game, yielding
  21 targets overall. Separating these targets from the other 126 gives
  mean $\Delta=+0.25$ for the GPT-5-family subgroup and $-2.78$ for the
  remaining targets. GPT-5-family play is more concentrated in every
  game, and its mean CQRE-pa KL is 1.7--3.3 times the corresponding
  within-game mean for other targets.

  This concentration is especially pronounced in \texttt{sm\_n3}, where
  the three GPT-5-family targets assign 98--99\% of their mass to the
  unique pure Nash profile $(A,A)$. Logit-QRE can approach this point mass
  as $\beta$ increases, whereas the maximum-entropy CQRE-pa prediction
  retains probability on multiple profiles. Logit-QRE therefore wins all
  three GPT-5-family comparisons in both \texttt{sm\_n3} and
  \texttt{rank2\_n5}. Concentration alone does not determine the result,
  however: CQRE-pa wins all 12 such comparisons across
  \texttt{sm\_n5}, \texttt{sm\_n7}, \texttt{rank1\_n5}, and
  \texttt{rank4\_n5}. The concentration pattern is therefore a useful
  diagnostic, but game structure also materially affects relative fit.

\subsubsection{Extended Logit-QRE Domain}
  \label{app:logit-sensitivity}

  We separately extend the Logit-QRE domain to
  $\beta\in[0,\infty]$, where $\beta=\infty$ denotes the limiting
  mixed-Nash prediction. Within the matched implementation, extending
  the domain leaves all target-level verdicts unchanged; only
  \texttt{rank3\_n5} changes at the displayed game-level precision.
  CQRE-pa remains favored on \texttt{sm\_n5}, with mean KL $.8648$
  versus $.9031$ and $\Delta=-.0383$ with 95\% interval
  $[-.0486,-.0351]$. It also remains favored on
  \texttt{rank4\_n5}, with mean KL $1.0759$ versus $1.1351$ and
  $\Delta=-.0592$ with 95\% interval $[-.0717,-.0520]$.

\subsection{Three-Player Paired-Refit Results}
\label{app:three-player-results}

The three-player analysis uses all 56 model triplets with replacement.
For each of the $B=1000$ synchronized bootstrap
replicates, both CQRE-pa and the Logit-QRE comparator are refit.  Table~\ref{tab:app-three-player} reports the verified aggregate.  Every one of the 56 target-specific
paired intervals lies strictly below zero.

\begin{table}[ht]
  \centering
  \small
  \setlength{\tabcolsep}{7pt}
  \caption{Verified paired-refit results for \texttt{sm\_3p3a}.}
  \label{tab:app-three-player}
  \begin{tabular}{@{}lr@{}}
  \hline
  Quantity & Value \\
  \hline
  Targets & 56 \\
  Bootstrap replicates per target & 1000 \\
  Mean point CQRE-pa KL & .49504 \\
  Mean point Logit-QRE KL & .63165 \\
  Mean point $\Delta$ & $-.13661$ \\
  Bootstrap mean $\Delta$ & $-.13806$ \\
  Aggregate paired 95\% CI & $[-.14322,-.13273]$ \\
  CQRE-pa / unresolved / Logit-QRE & 56 / 0 / 0 \\
  \hline
  \end{tabular}
\end{table}

\subsection{Effective Slack Dimension}
\label{app:slack-rank}
\label{app:slack-dimension}

For each strongly-monotone game size, let $G\in\Real^{21\times 2n}$
contain the componentwise-minimal fitted CQRE-pa slack vectors, one per
model pairing, and center every column of $G$.  If
$s_1\geq s_2\geq\cdots$ are the singular values of the centered matrix,
we define
\begin{equation}
 r_{90}:=\min\left\{r:
 \frac{\sum_{k=1}^{r}s_k^2}{\sum_k s_k^2}\geq .90\right\}.
 \label{eq:app-r90}
\end{equation}
Table~\ref{tab:app-slack-rank} and Figure~\ref{fig:gamma-scree} report the fitted-rank results and singular-value spectra. Across the eight games, 90\% of centered slack variation is captured by 2--7 singular directions, corresponding to 22--50\% of the nominal dimension.

\begin{table}[ht]
  \centering
  \small
  \setlength{\tabcolsep}{8pt}
  \caption{Effective dimension of fitted per-action slack vectors.}
  \label{tab:app-slack-rank}
  \begin{tabular}{@{}lrrrr@{}}
  \hline
  Game & Nominal dim. & $r_{90}$ & $r_{90}/(2n)$ & $s_1^2/\sum_k s_k^2$ \\
  \hline
  \texttt{sm\_n3}  & 6  & 2 & 33.3\% & .634 \\
  \texttt{sm\_n4}  & 8  & 4 & 50.0\% & .612 \\
  \texttt{sm\_n5}  & 10 & 3 & 30.0\% & .541 \\
  \texttt{sm\_n6}  & 12 & 4 & 33.3\% & .690 \\
  \texttt{sm\_n7}  & 14 & 6 & 42.9\% & .493 \\
  \texttt{sm\_n8}  & 16 & 7 & 43.8\% & .381 \\
  \texttt{sm\_n9}  & 18 & 4 & 22.2\% & .556 \\
  \texttt{sm\_n10} & 20 & 7 & 35.0\% & .321 \\
  \hline
  \end{tabular}
\end{table}

\begin{figure}[t]
  \centering
  \includegraphics[width=.46\textwidth]
  {\detokenize{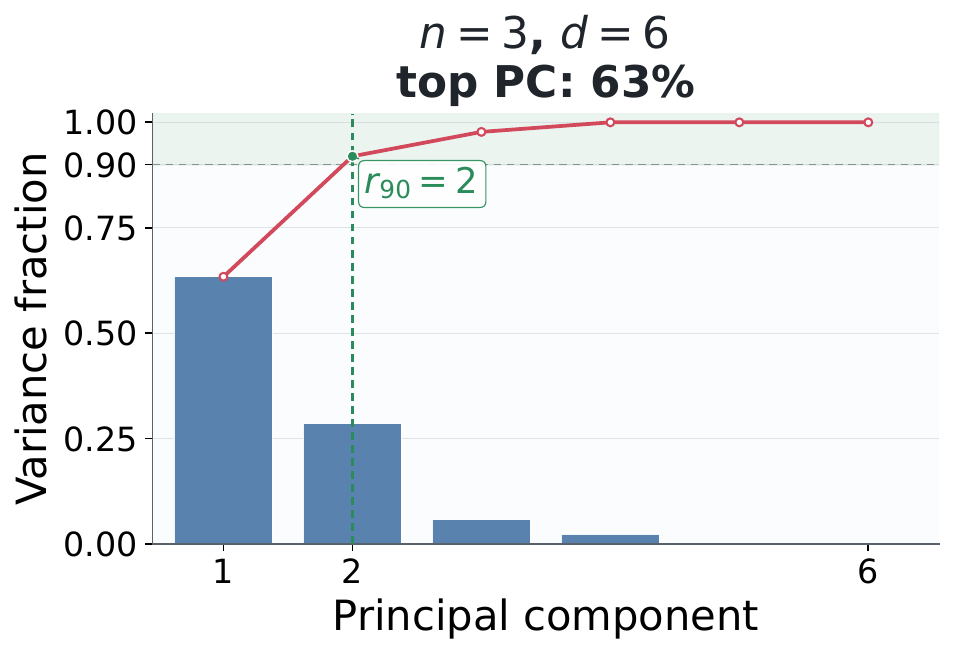}}
  \hfill
  \includegraphics[width=.46\textwidth]
  {\detokenize{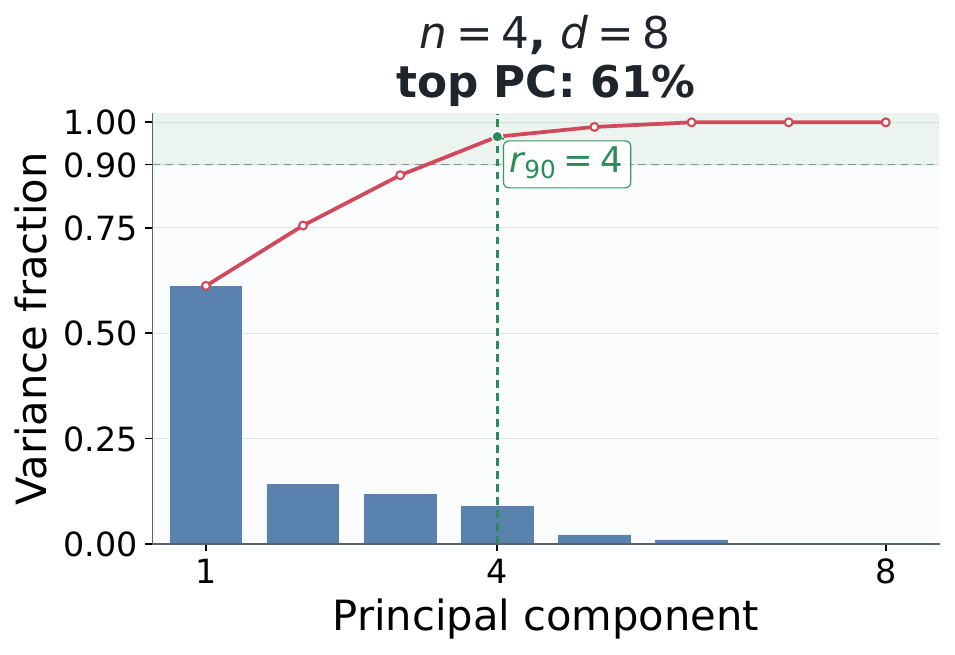}}

  \par\vspace{-1mm}

  \includegraphics[width=.46\textwidth]
  {\detokenize{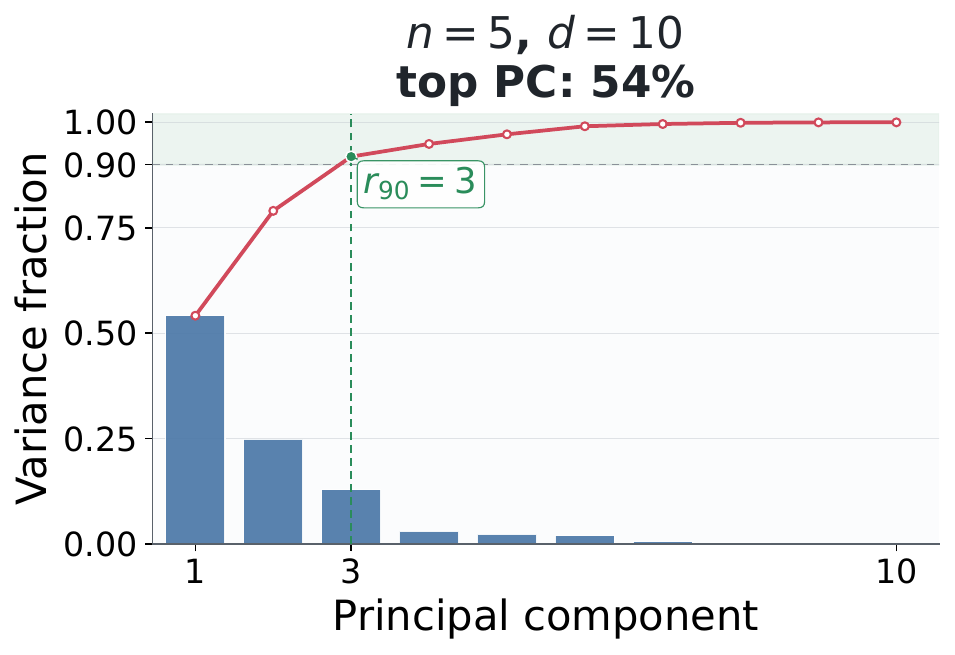}}
  \hfill
  \includegraphics[width=.46\textwidth]
  {\detokenize{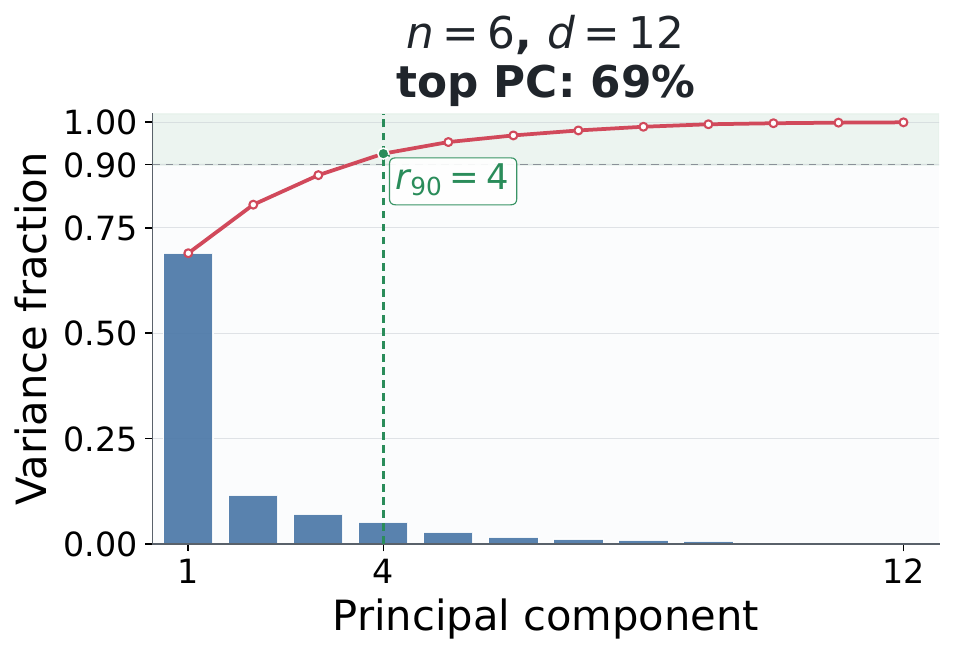}}

  \par\vspace{-1mm}

  \includegraphics[width=.46\textwidth]
  {\detokenize{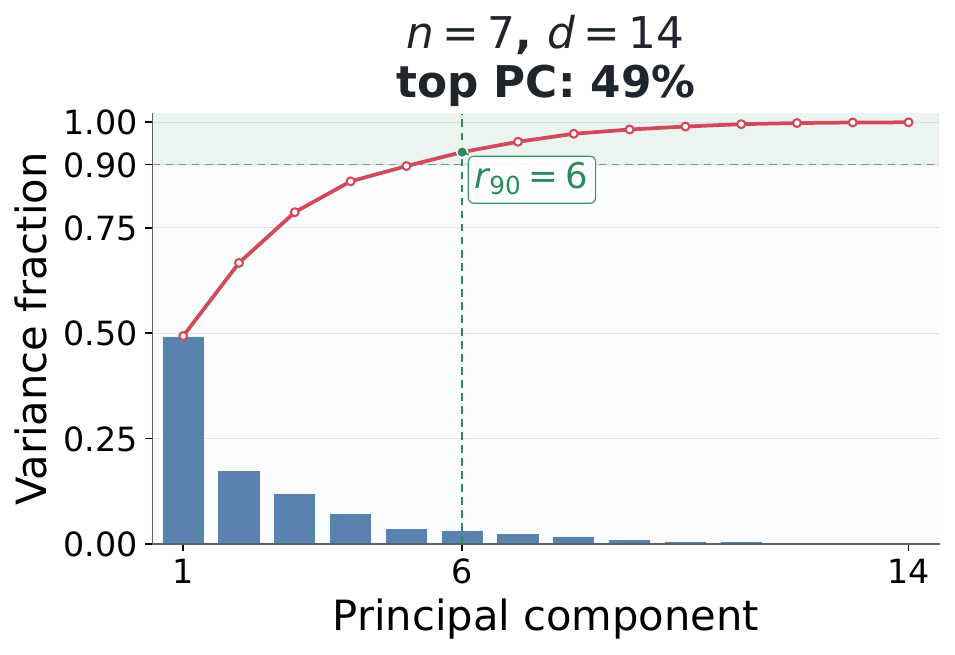}}
  \hfill
  \includegraphics[width=.46\textwidth]
  {\detokenize{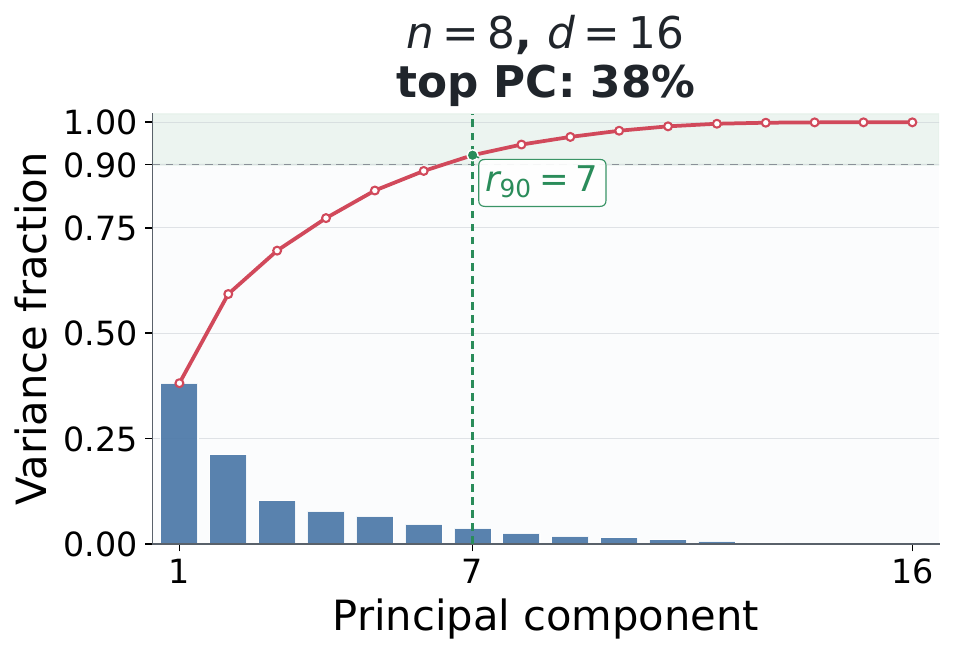}}

  \par\vspace{-1mm}

  \includegraphics[width=.46\textwidth]
  {\detokenize{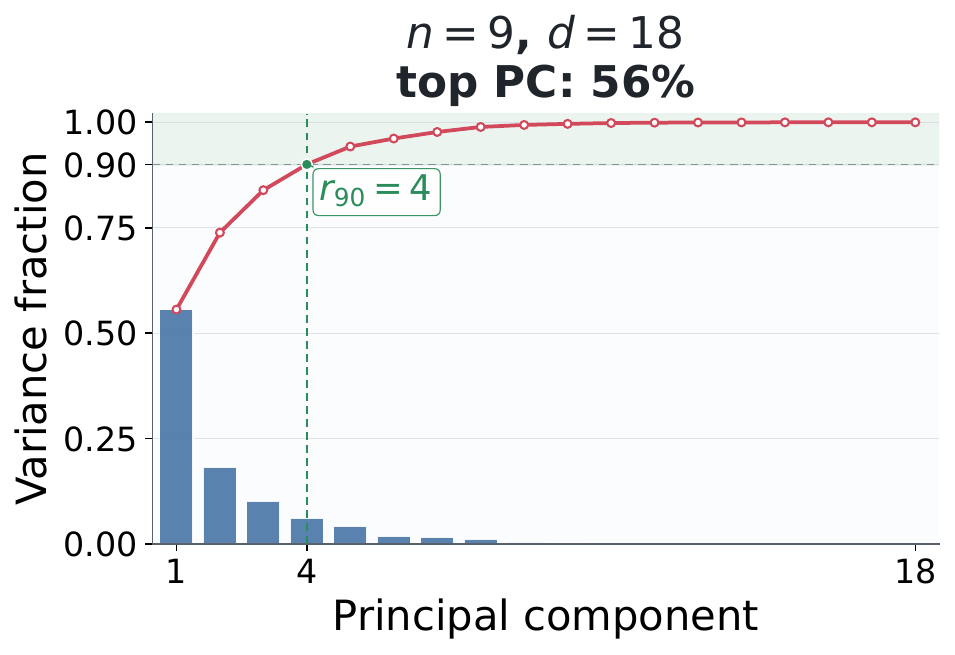}}
  \hfill
  \includegraphics[width=.46\textwidth]
  {\detokenize{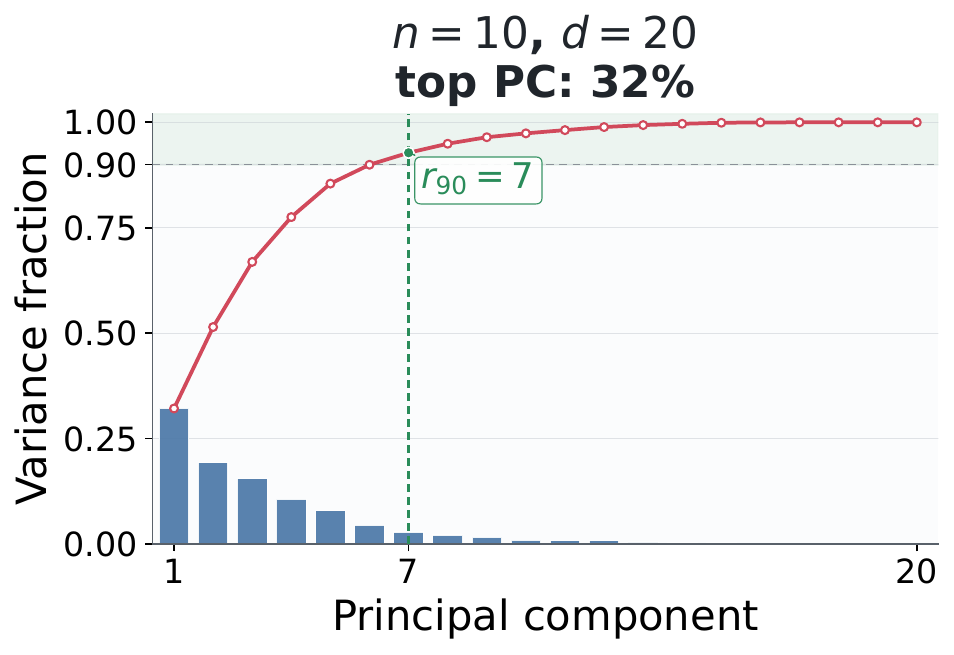}}
  \caption{Singular-value spectra of the centered
  componentwise-minimal CQRE-pa slack vectors. Each panel contains
  21 model-pair targets for one strongly-monotone game. Bars show
  per-component variance fractions, the red curve shows cumulative
  variance, the dashed horizontal line marks 90\%, and the green line
  marks $r_{90}$.}
  \label{fig:gamma-scree}
  \end{figure}

\end{document}